\theoremstyle{definition}
\newtheoremstyle{theoremstyle1}
{10pt}
{17pt}
{}
{}
{\itshape}
{}
{\newline}
{\thmname{#1}\thmnumber{ #2}\thmnote{ (#3)}}
\theoremstyle{theoremstyle1}
\newtheorem{definition}{Definition}[section]
\newtheorem{remark}{Remark}[section]
\newtheorem{example}{Example}[section]
\newtheorem{proposition}{Proposition}[section]
 	\definecolor{gray}{rgb}{0.7,0.7,0.7}
\newlength{\LL}
 \newcommand{\rev}[1]{#1}
 \newcommand{\revi}[1]{{#1}}
\begin{document}

\title{On Norm-Based Estimations for
\\
 Domains of Attraction in \\  Nonlinear Time-Delay Systems
}
\date{}
\author{Tessina H. Scholl
\thanks{
Institute for Automation and Applied Informatics,
Karlsruhe Institute of Technology (KIT), 
76344 Eggenstein-Leopoldshafen, Germany,\newline
e-mail: \{tessina.scholl; veit.hagenmeyer; lutz.groell\}@kit.edu 
} 
    \and
Veit Hagenmeyer\footnotemark[1]\and
Lutz Gr\"oll \footnotemark[1]
}

\maketitle

\begin{abstract}
For nonlinear time\rev{-}delay systems, domains of attraction are rarely studied despite their importance for technological applications. The present paper  \rev{provides methodological hints for the determination of} an upper bound on the radius of attraction by \rev{numerical means}. Thereby, the respective Banach space for initial functions has to be selected and primary initial functions have to be chosen. The latter are used in time\rev{-}forward simulations to determine a first upper bound on the radius of attraction. Thereafter, this upper bound is refined by secondary initial functions\rev{,} which result a posteriori from the preceding simulations. Additionally, a bifurcation analysis \rev{should be} undertaken.
This analysis results in a possible improvement of the previous estimation.
An example of a time-delayed swing equation demonstrates \rev{the various aspects}.
\end{abstract}

\section{\label{sec:Introduction}Introduction}

Time delays due to communication, measurement, data processing, delayed actuator reactions\rev{,} or transport processes are omnipresent in technical systems. That is why stability analysis of time\rev{-}delay systems has gathered increasing interest in the last decades. However, the focus was mainly on stability criteria for linear systems  \cite{Gu.2003,Niculescu.2004,Briat.2015,Wu.2010,Michiels.2014}, be it Lyapunov-Krasovskii functionals \cite{Hale.1993,Seuret.2016,Fridman.2014b}, Lyapunov-Razumikhin functions \cite{Hale.1993}, comparison principles \cite{Niculescu.2001,Halanay.1966,Dambrine.1994}, input-output ap\-proach\-es \cite{Briat.2011}\rev{,} or eigenvalue calculations \cite{Breda.2015,Engelborghs.2002,Insperger.2017,Jarlebring.2008}. \rev{By contrast, in technical applications, we encounter nonlinear systems \cite{Otto.2019,Dombovari.2008,Yan.2019,Dombovari.2019,Schafer.2015}}.
Indeed, for a nonlinear system the Principle of Linearized Stability \cite{Diekmann.1995} allows to deduce stability or instability of equilibria, but the question arises: what is the practical relevance of knowledge about asymptotic stability 
if there is no knowledge about the domain of attraction? 
In this light, small initial disturbances can still result in departing trajectories. In relation to the results listed above, estimations on  domains of attraction in time\rev{-}delay systems are 
rare in literature \cite{MelchorAguilar.2006,Villafuerte.2007,Fridman.2014,Cao.2002,Coutinho.2008}. 
Time\rev{-}delay systems require an initial function instead of the initial value, which would be sufficient in ODEs. Hence, the state space, 
which contains the domain of attraction,
is infinite dimensional.
The present paper addresses such domains of attraction for equilibria in autonomous retarded functional differential equations (RFDE) \revi{with a focus on systems with a discrete time delay}. 
\\
\\
To this end, consider the spectral abscissa, 
which \rev{in an eigen\-value-based stability analysis of the linearized system} decides about asymptotic stability and instability ({\itshape Principle of Linearized Stability}, Diekmann \cite{Diekmann.1995}\rev{, Ch. VII, Theorem 6.8}). It should be noted that this decisive number has neither an implication to allowed perturbations of the RFDE right-hand side or system parameters, i.e.\ robustness, nor to allowed initial perturbations, i.e.\ the domain of attraction \cite{Goldsztejn.2019}. 
However, in practical applications various requirements concerning the domain of attraction occur and necessitate adequate approaches.
\par
{\begingroup
\parskip0.3\baselineskip
\setlength\parindent{-2.1em} 
\leftskip=2.1em 
\par
{\itshape (R1) }{\itshape (test)} Let a certain initial function be given. Is this initial function an element of the domain of attraction?
(e.g.\ contingency tests in power systems) 
\par
{\itshape (R2) }{\itshape  (prove)}  Let a bound for the initial functions in a certain norm be given. Do all initial perturbations within this bound lead to attractor-convergent solutions? (e.g.\ requirements in controller design) 
\par
{\itshape (R3) }{\itshape (disprove)} Proving non-fulfillment of (R2). 
\par
{\itshape (R4) }{\itshape (compare)} Let different systems or controller configurations be given. Which of the systems is the better one w.r.t.\ the domain of attraction in the sense of some order relation? 
\par
{\itshape (R5) }{\itshape (parametrize)} Let a system or controller with  free parameters be given. How is the domain of attraction influenced by the parameters? (e.g.\ system analysis and controller synthesis)
\par
\endgroup}
\par
\noindent Different approaches are possible in order to meet the requirements listed above.
\\[\LL]
{\itshape (i) A single time\rev{-}domain simulation} for each specific initial function of interest is able to fulfill requirement (R1) in a hardly conservative way. However, in contrast to norm-based criteria, which - once formulated - allow to conclude convergence directly, the computational effort of such individual simulations should be kept in mind. This is for instance the reason why in (delay-free) power system stability analysis, methods have been introduced,  which do no longer rely only on on-demand time\rev{-}domain simulations \cite{Chiang.2011,Chiang.2009}. 
\\[\LL]
{\itshape (ii) Numerical Basin stability} \cite{Menck.2013} in its original form for time-delay-free systems approximates the Lebesgue measure of the finite dimensional domain of attraction (blue shaded area in Figure \ref{fig:Measures}). 
\begin{figure} 
\centering
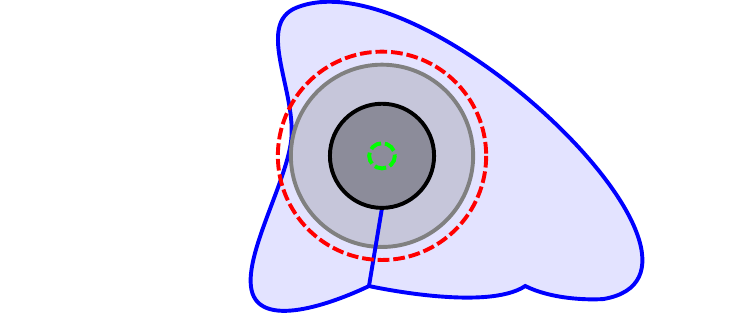
\caption{\label{fig:Measures} Discussed \revi{numbers for} the domain of attraction $\mathcal D_X$ of an equilibrium point $x_e$ and their relation to the radius of attraction $R_X$. For the sake of visualization, the finite dimensional case $X=\mathbb R^2$ is considered. 
}
\end{figure}
Hence, there is a relation to the probability of randomly chosen initial values to be within the domain of attraction. Leng et al.\ \cite{Leng.2016} give a generalization to time\rev{-}delay systems. The derived number can contribute to a comparison of different system configurations as required in (R4).
\\[\LL]
{\itshape (iii) The level set of a Lyapunov-Krasovskii functional} \cite{Hale.1993,MelchorAguilar.2006} can be used as estimation for the domain of attraction (represented by the inner of $V\rev{=} V_{crit}$ in Figure \ref{fig:Measures}). Consequently, it yields a scalar criterion in order to test whether an initial function surely belongs to the domain of attraction (R1). The background is a generalization of LaSalle's invariance principle to RFDEs \cite{Hale.1965b}. However, already for linear systems Lyapunov-Krasovskii functionals are challenging. 
Vast efforts have been made in view of the possible delay dependence of stability in such linear RFDEs. The challenge is then to derive functionals that conclude stability even for time delay values near the bound at which stability is lost. Nonlinear systems necessitate individual approaches. Alternatively, instead of searching for such special functionals, there is also the ansatz to use a functional \rev{that} is suitable for the linearized system. The latter approach is well-known for ordinary differential equations (ODE) \cite{Khalil.2002}. \revi{However, it has} to be reckoned with conservative estimations. Lyapunov-Krasovskii functionals are in general complicated expressions, such that the raw sublevel set criterion itself is not very intuitive. Therefore, it is usually translated to a norm criterion, see (vi). 
\\[\LL]
{\itshape (iv) The level set of a Lyapunov-Razumikhin function} \cite{Souza.2014b} can be used as well, since there are also Razumikhin-based generalizations of LaSalle's invariance principle \cite{Haddock.1983}. 
\\[\LL]
{\itshape (v) Bounds in a finite dimensional parameter space} \cite{Liu.2014}, by analytical or numerical methods, come into play if only a certain class of parametrizable initial functions is relevant. For instance, a jump from a predefined initialization to a variable value is a class of initial functions \rev{that} is fully determined by a single parameter. 
Then, the space of initial functions is indeed finite dimensional and the finite number of parameters is decisive. 
\\[\LL]
{\itshape (vi) 
An underestimation of the radius of attraction} yields a norm criterion \rev{that} ensures convergence to the attractor (interior of the green dashed circle with radius $\check R_X$ in Figure \ref{fig:Measures}). It is essential in (R2) and yields a very simple criterion for testing in (R1). Of course, usually nothing is known about the conservativeness of such criteria. 
Results based on Lyapunov-Krasovskii functionals can be translated to such easier to handle norm criteria, which goes along with further conservativeness \cite{MelchorAguilar.2006,Villafuerte.2007}.
\\[\LL]
{\itshape (vii)  An overestimation of the radius of attraction} (interior of the red dashed circle with radius $\hat R_X$ in Figure \ref{fig:Measures}) means an upper bound on possible norm criteria described above. Such an overestimation is needed in order to prove that any larger norm requirement certainly cannot be fulfilled, i.e.\ (R3). \rev{Although} no real relations between domains of attraction can be deduced, the knowledge that one system surely possesses a small domain of attraction might also give hints for comparison purposes (R4). \rev{Furthermore, there is usually no means of getting an impression how close a lower bound might be to the real radius of attraction, unless an upper bound is provided additionally.}
As soon as an initial state \rev{that} does not belong to the domain of attraction is found, its norm is known to be such an upper bound on the radius of attraction.
This is what simulative approaches are predestined for and what the present paper addresses.   
\\
\\
The present paper is organized \revi{as follows.} The introduction closes with used notations and needed preliminaries in the context of autonomous retarded functional differential equations. Section \ref{sec:DomainOfAttraction} addresses domain and radius of attraction as well as generalized concepts thereof. Section \ref{sec:SelectionOfTheBanachSpace} shows why the state space selection should be well considered and proposes how the frequent situation of differently delayed state variables can be appropriately treated. Section \ref{sec:SimApproach} provides the reader with instructions on how a simulation\rev{-}based approach should be tackled. In section \ref{sec:BifApproach}, a bifurcation analysis\rev{-}based approach is introduced, which results in a possible refinement of previous estimations. Finally, the overall methodology is demonstrated with the delayed swing equation as example system. 

\subsection{\label{sec:}Notation and Preliminaries}
Consider the mathematical nomenclature:\\[0.5em]
\begin{tabular}{ll}
$C([a,b],\mathbb R^n)$ &space of continuous functions with\\
&  $\|\phi \|_C\stackrel{\textrm{def}}{=}\max\limits_{\theta\in [a,b]}(\|\phi(\theta)\|_2)$ \\
$PC([a,b],\mathbb R^n)$ &space of piecewise continuous functions,\\
& $\|\cdot\|_{PC}=\|\cdot\|_{C}$ \\
$L^p([a,b],\mathbb R^n)$ & $L^p$-space with \\
&$\|x \|_{L^p}=(\int_a^b \|x(\theta)\|^p\textrm d\theta)^{1/p}$ \\
$\mathbb R$, $\mathbb R_{> 0}$ & real (positive) numbers \\
$\mathcal  B_X(c;r)$ &open ball in a normed space $X$\\
& with center $c\in X$ and radius $r\in \mathbb R_{>0}$ \\
$\mathcal  B_X(r)$ &$=\mathcal  B_X(0_X;r)$ \\
$0_{[a,b]}$; $0_n$ &zero function $[a,b] \to \mathbb R^n$; zero in $\mathbb R^n$ \\
$\chi_{M}(t)$ &indicator function w.r.t. a set $M$ \\
& $\chi_{M}(t)=\left\{\begin{array} {ll}1, & \textrm{if } t\in M\\ 0, & \textrm{otherwise }  \end{array} \right.$ \\
$\textrm{dom}$ 	&domain of a function \\
$\partial M$, $\overline M$ 	&boundary, closure of a set $M$ \\
$x(t;\phi)$	&solution at time $t$ for initial function $\phi$ \\
$x_t=x_t(\cdot;\phi)$	&state, i.e.\ delay-width segment of $x$ at  $t$ \\
&$x_t\colon[-\tau,0] \to \mathbb R^n$,  $\theta \mapsto x_t(\theta):=x(t+\theta)$ 
\end{tabular}
\\
\\
In the following, we revisit some required preliminaries concerning {\itshape autonomous retarded functional differential equations (RFDE)}. 
Subsequently, $X$ denotes a normed space on the delay interval $[-\tau,0]$, $\tau\in \mathbb R_{>0}$, e.g., the Banach space of continuous functions $X=(C([-\tau,0],\mathbb R^n),\|\cdot\|_C)$ endowed with the uniform norm $\|\phi\|_C\stackrel{\textrm{def}}{=}\max\limits_{\theta\in [-\tau,0]}(\|\phi(\theta)\|_2)$.
The notation of a state as $x_t\in X$,
\begin{align}
x_t(\theta)\stackrel{\textrm{def}}=x(t+\theta), \quad \theta \in [-\tau,0]\rev{,}
\label{eq:notation}
\end{align}
is common.
It represents a restriction of the solution function $x(\tilde t)$ to the time horizon $\tilde t\in[t-\tau,t]$ (Figure \ref{fig:states_3d}).
\begin{figure}
\centering
\includegraphics{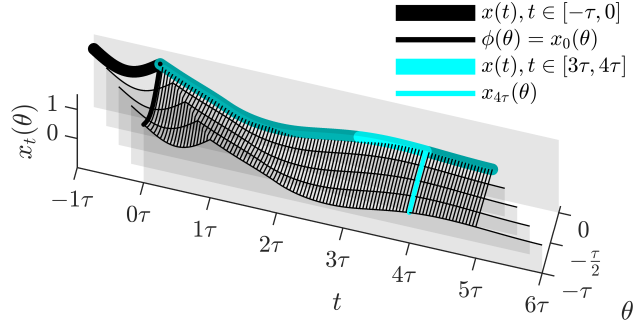}
\caption{\label{fig:states_3d} Exemplary initial data $\phi\in X$ and state $x_{4\tau}\in X$ 
}
\end{figure}
Hence, a {\itshape forward trajectory} of an initial function $x_0=\phi$
\begin{align}
\gamma^+(\phi)\stackrel{\textrm{def}}=\{x_t(\cdot;\phi)\in X:t\in[0,t_{max})\}
\end{align}
describes the set of all states \rev{that} arise over the whole forward time in which the solution exists $t\in[0,t_{max})$. Thereby, $x_t(\cdot;\phi)$ refers to
solutions of the Cauchy problem 
\begin{alignat}{4}
\dot x(t)&=F(x_t) ,\quad &&t>0 
\label{eq:RFDE}\\
x(\theta)&=\phi(\theta), &&\theta\in[-\tau,0] \nonumber
\end{alignat}
with an initial state $\phi\in X$ and $x(t)\in \mathbb R^n$, $\dot{()}=\frac{\textrm{d}(\cdot)}{\textrm{d}t}$, $F\colon X\to \mathbb R^n$ continuously differentiable, $F(0_{[-\tau,0]})=0_n$. 
For theorems about well-posedness in certain Banach spaces\rev{,} see, e.g.,  Hale and Verduyn Lunel \cite{Hale.1993} or Kharitonov \cite{Kharitonov.2013}.
We are mainly interested in {\itshape autonomous differential difference equations} \cite{Bellman.1963} with a single discrete delay $\tau>0$, as a special type of RFDE (\ref{eq:RFDE}), where
\begin{align}
F(x_t)=f(x(t),x(t-\tau)),
\label{eq:RFDE_singleDelay}
\end{align} 
$f\colon \mathbb R^n \times \mathbb R^n \to \mathbb R^n$. 
\noindent An equilibrium state of (\ref{eq:RFDE}) is defined as $\phi_e(\theta)\equiv x_e: x_t(\cdot;\phi_e) =\phi_e(\cdot)$, $\forall t \geq 0$ with an equilibrium solution $x(t;\phi_e) \equiv x_e$. The value $x_e$ is simply named equilibrium. W.l.o.g. we assume $x_e=0_n$ for the equilibrium of interest.
The focus of the present paper lies on the domain of attraction of this equilibrium. Hence, the trivial equilibrium is required to be attractive. Although it is well known that attractivity and stability according to Lyapunov are actually independent properties in nonlinear systems \cite{Hahn.1967}, we are only concerned with their combination, i.e.\ asymptotic stability. The latter has the advantage that it can easily be proven by the {\itshape{Principle of Linearized Stability}} (Diekmann et al. \cite{Diekmann.1995}, Chapter VII, Theorem 6.8) if the equilibrium is hyperbolic.
\begin{definition}[Local Asymptotic Stability (LAS) in $X$]
\label{def:LAS}
The zero equilibrium of the autonomous system (\ref{eq:RFDE}) is 
{\itshape locally asymptotically stable  (LAS) in $X$} if  \\
it is {\itshape stable} in $X$, i.e.\ 
\begin{align}
\forall \varepsilon>0, \;\exists \delta(\varepsilon)>0: \; \|\phi\|_X\leq\delta  \; \Rightarrow \|x_t\|_X \leq \varepsilon, \; \forall t\geq 0\rev{,}
\label{eq:stable}
\end{align}
and {\itshape locally attractive} in $X$, i.e.\ 
\begin{align}
\exists \delta_a>0: \quad \|\phi\|_X\leq \delta_a  \; \Rightarrow \lim_{t\to \infty} \|x_t\|_X=0.
\label{eq:attractive}
\end{align}
\end{definition}
\noindent Definition \ref{def:LAS} implicitly assumes forward completeness of solutions in an environment around the zero solution, i.e.\ the solutions under consideration exist for all $t\in[0,\infty)$. 
\section{Domain and Radius of Attraction} \label{sec:DomainOfAttraction}
\subsection{Domain of Attraction}
The present paper deals with the domain of attraction $\mathcal D_{X}\subseteq X$ (also referred to as  region of attraction,   basin of attraction\rev{,} or region of asymptotic stability in literature) of an asymptotically stable equilibrium state in $\phi_e=0_{[-\tau,0]}$.
Thus, the domain of attraction collects all initial functions $\phi\in X$ that lead to a zero-convergent solution. 
\begin{definition}[Domain of Attraction]
The {\itshape domain of attraction} of an asymptotically stable zero equilibrium is defined as
\begin{align}
\mathcal D_{X}=\{\phi\in X: x_t \text{ exists on } t\in[0,\infty) \text{ and }\lim_{t\to \infty}\|x_t\|_X = 0  \}. \nonumber
\end{align}
\end{definition}
\subsection{Radius of Attraction} 
Based on the notion of a norm ball around $0_{[-\tau,0]}=\phi_0\in X$\rev{,}
\begin{align}
\mathcal  B_X(r)&:=\mathcal  B_X(0_{[-\tau,0]};r)\\
\textrm{with } \mathcal  B_X(\phi_0;r)&\stackrel{\textrm{def}}{=}\{\phi\in X: \|\phi-\phi_0\|_{X}< r\}\rev{,}
\end{align}
a strong simplification can be derived:
\begin{definition}[Radius of Attraction and Ball of Attraction]
The {\itshape radius of attraction} in $X$ of an asymptotically stable equilibrium is defined as
\begin{align}
R_X=\sup\{r > 0: \mathcal B_X(r)\subseteq \mathcal D_{X}\}
\label{eq:RadiusOfAttraction}
\end{align}
and the {\itshape ball of attraction} $\mathcal B_X(R_X)$ denotes the largest norm ball inside the domain of attraction $\mathcal D_X$. 
\end{definition}
\noindent The radius of attraction is schematically represented by the black circle in Figure \ref{fig:Measures}. Thereby, the definition of attractivity in (\ref{eq:attractive}) ensures $\mathcal B(R_X)\neq \emptyset$ by $R_X\geq\delta_a>0$. Of course, the exact value of $\mathcal B(R_X)$ remains usually unknown.
\begin{remark}[Notion of the Radius of Attraction]
The term radius of attraction is analogously used in the context of forward / pullback attractors in nonautonomous delay-free systems \cite{Kloeden.2011}. The definition is in accordance with the well-known universal definition of a stability radius as largest ball, whose elements satisfy a certain stability condition. However, it should be carefully distinguished between the question of robustness, i.e.\ parameter perturbations, which is often addressed by this concept  \cite{Hinrichsen.1990b,Hu.2003}, and the domain of attraction, i.e.\ initial value perturbations.   
\end{remark}
\noindent While the radius of attraction (\ref{eq:RadiusOfAttraction})  is defined as largest ball in the interior of the domain of attraction, it can also be considered as the minimum distance of the complement $\mathcal D_X^c$ to the origin. This leads to the equivalent definition 
$
R_X=\inf_{\phi\in X}\{\|\phi\|_X: x(t;\phi) \not \to 0\}
$, which is helpful for numerical estimations.
\begin{definition}[Upper Bound on the Radius of Attraction Based on a Set of Initial Functions] \label{def:hat_R_X}
Assume $\Phi \subset X$ is a set of tested initial functions.
The resulting {\itshape upper bound on the radius of attraction} $\revi{\hat R_X(\Phi)}\geq R_X$ is given by
\begin{align}
\revi{\hat R_X(\Phi)}=\inf_{\phi\in \Phi}\{\|\phi\|_X: x(t;\phi) \not \to 0\}.
\label{eq:hat_R_X}
\end{align}
\end{definition} 
\noindent If no divergent solution can be found, the above definition yields 
\begin{align}
\revi{\hat R_X(\Phi)}=\inf \;\emptyset = \infty.
\end{align}
The following simple scalar example demonstrates how sensitive such estimations are to the choice of initial functions.
\begin{example}[Scalar Nonlinear RFDE] 
\label{ex:scalarExample}
 Consider 
\begin{align}
\label{eq:scalarExample}
\dot x (t)&=-x(t)-x(t-\tau)+x^3(t), \quad &t>0& \\
x(\theta)&=\phi(\theta),&\theta\in[-\tau,0]& \nonumber
\end{align}
with $x(t)\in \mathbb R$ and $\phi \in PC([-\tau,0],\mathbb R)$.
Figure \ref{fig:scalarExample} shows some solutions, where initial functions in the domain of attraction $\phi\in \mathcal D_{PC}$ are printed in black.
\begin{figure*} 
\ifdim 17cm<\textwidth 
\begin{minipage}[b]{1.5cm}
$\tau=1:$
\vspace{9.5em}
\end{minipage}
 \subfloat[]{
\includegraphics[]{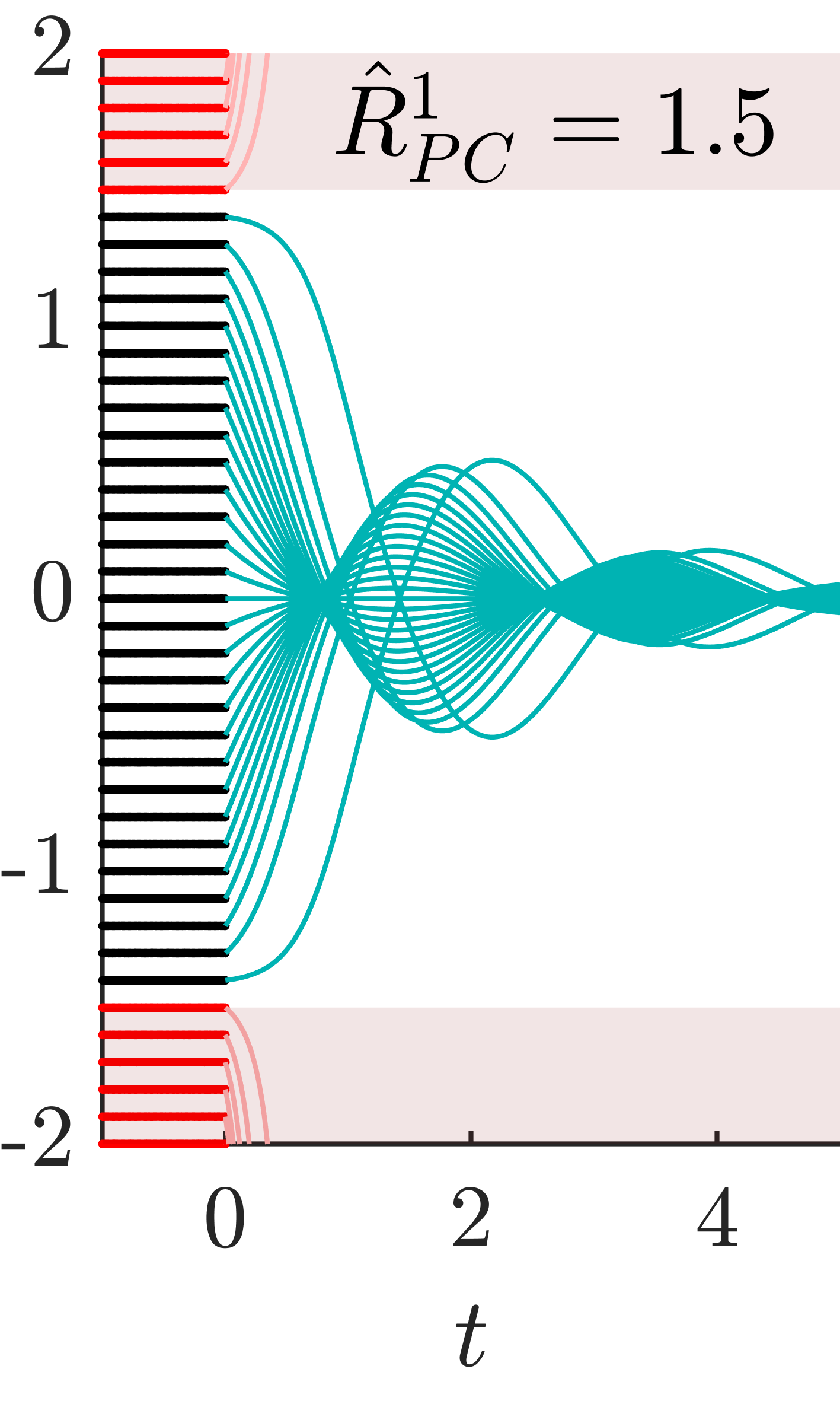}
}
\hfill
 \subfloat[]{
\includegraphics[]{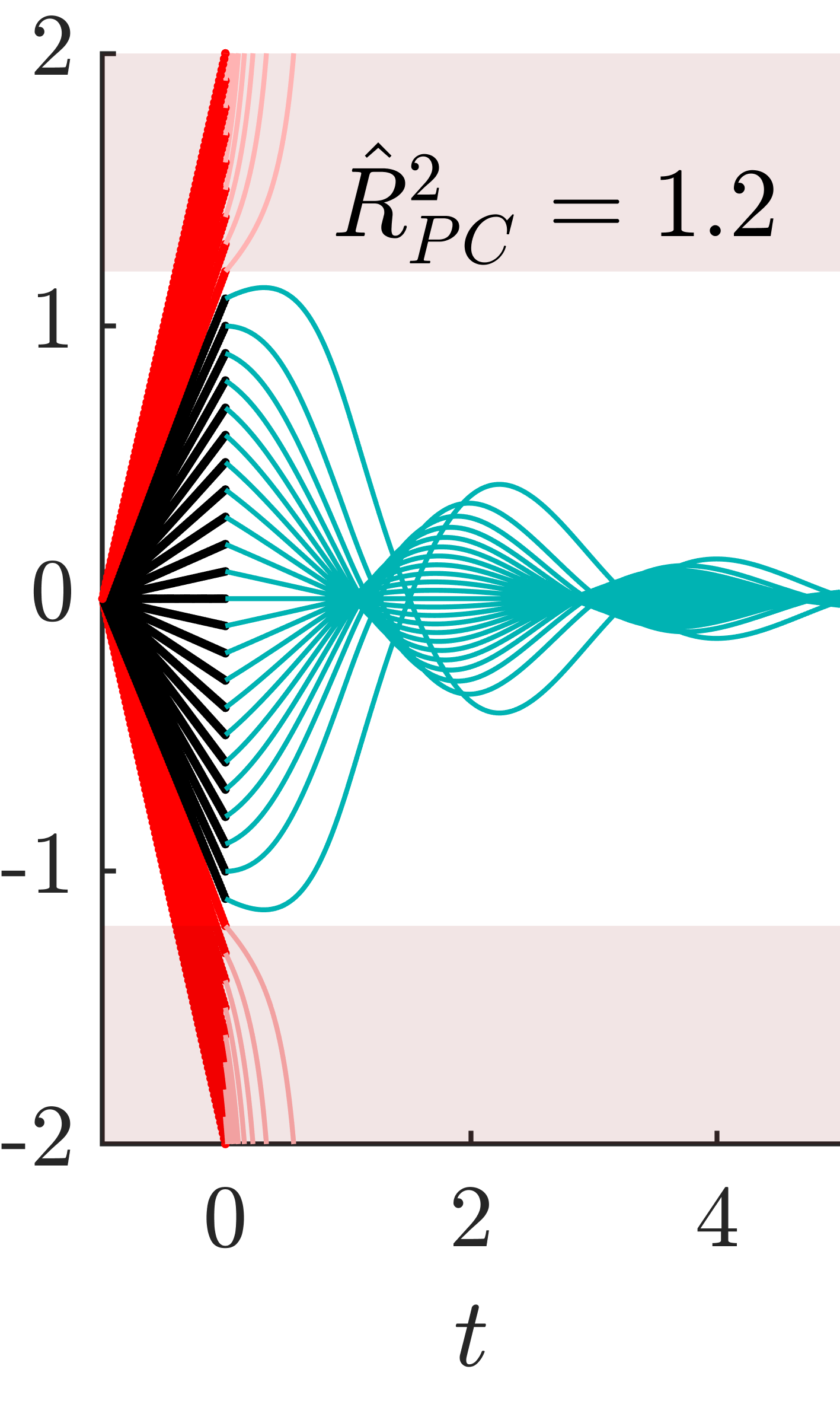}
}
\hfill
 \subfloat[\label{fig:scalar_step1}]{
\includegraphics[]{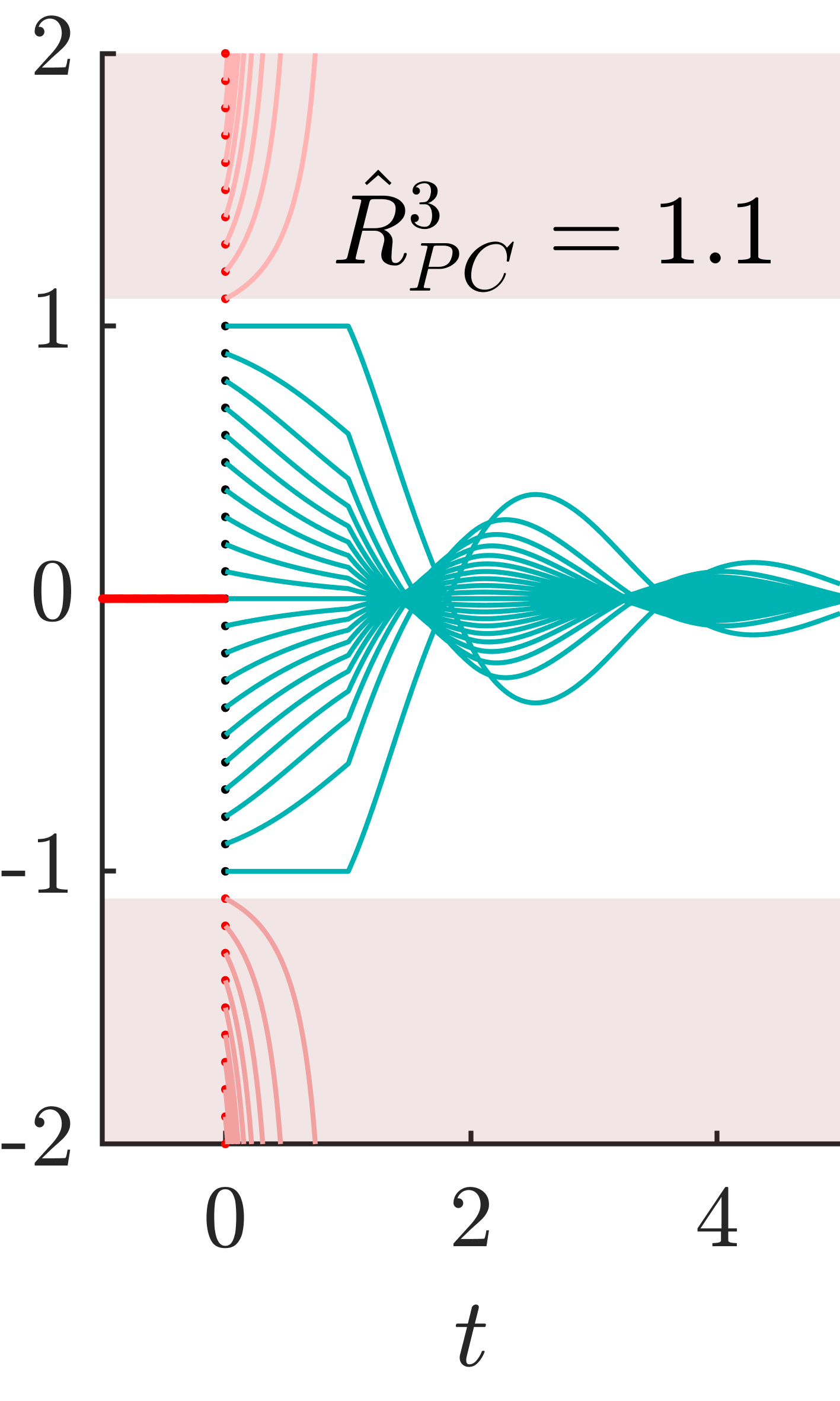}
}
\hfill
 \subfloat[\label{fig:scalar_lin_dec1}]{
\includegraphics[]{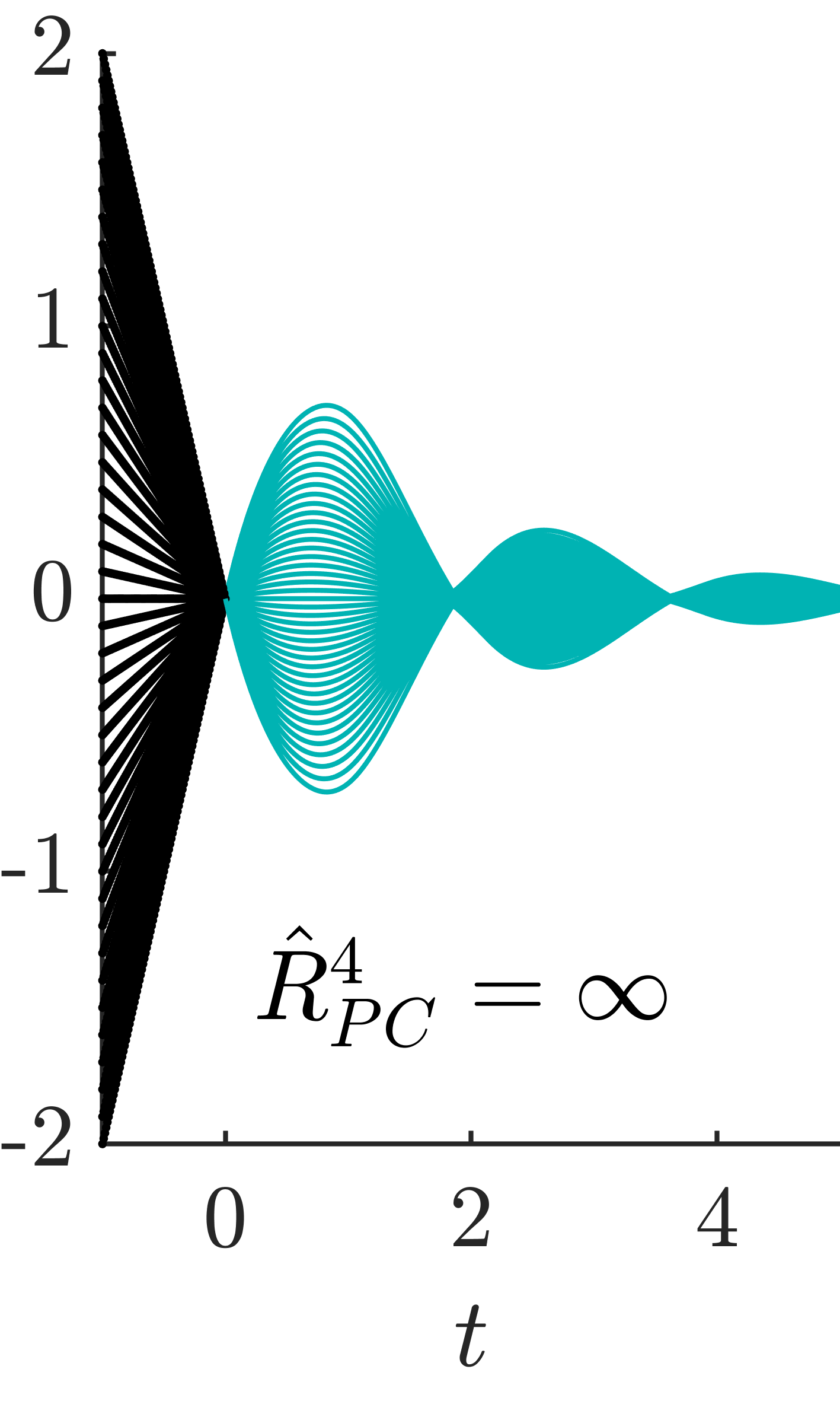}
}
\hspace{-0.75cm}
\begin{minipage}[b]{2.5cm}
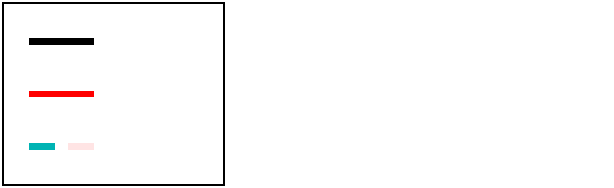 
\vspace{10em}
\end{minipage}
\\
\begin{minipage}[b]{1.5cm}
$\tau=5:$
\vspace{9.5em}
\end{minipage}
 \subfloat[]{
\includegraphics[]{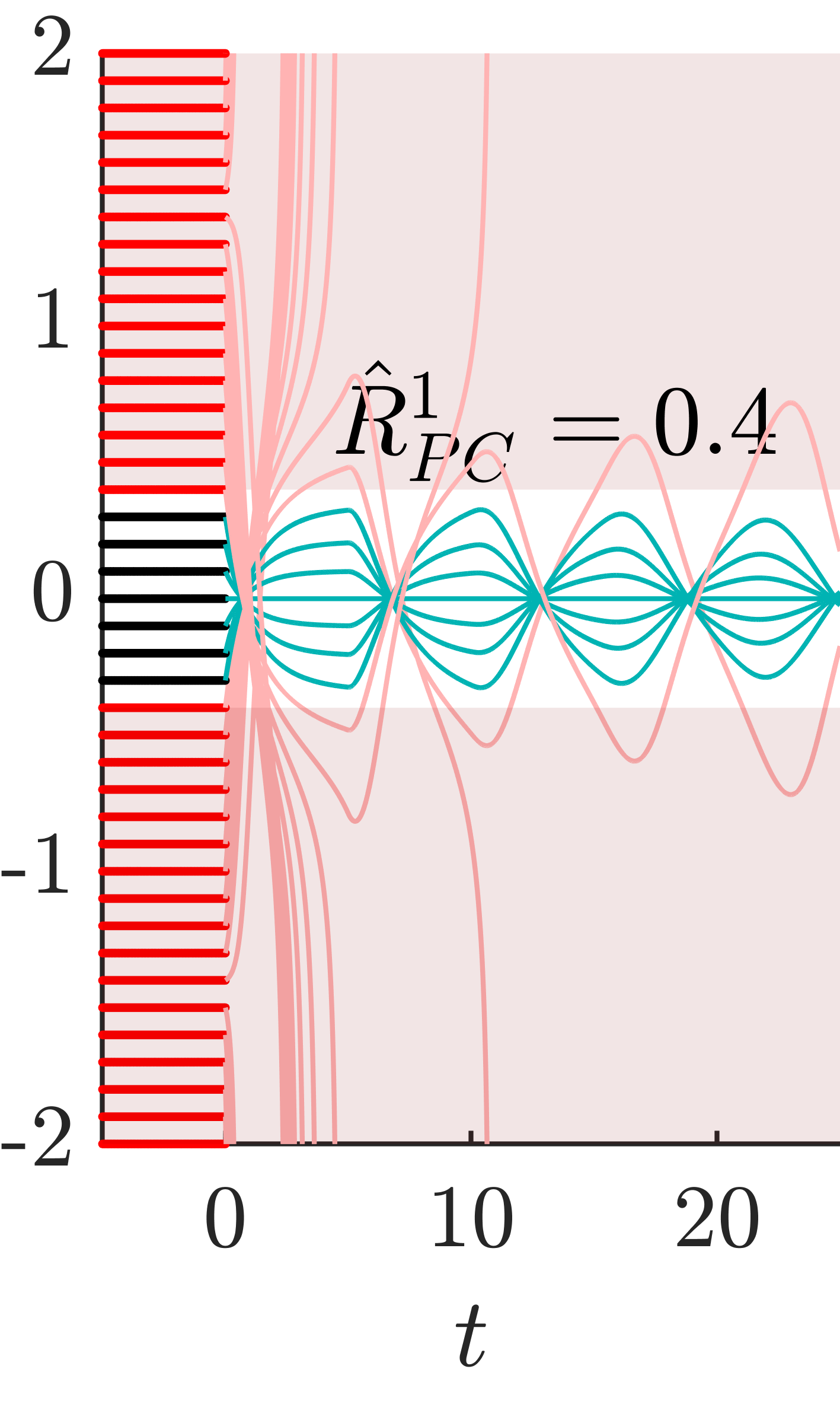}
}
\hfill
 \subfloat[]{
\includegraphics[]{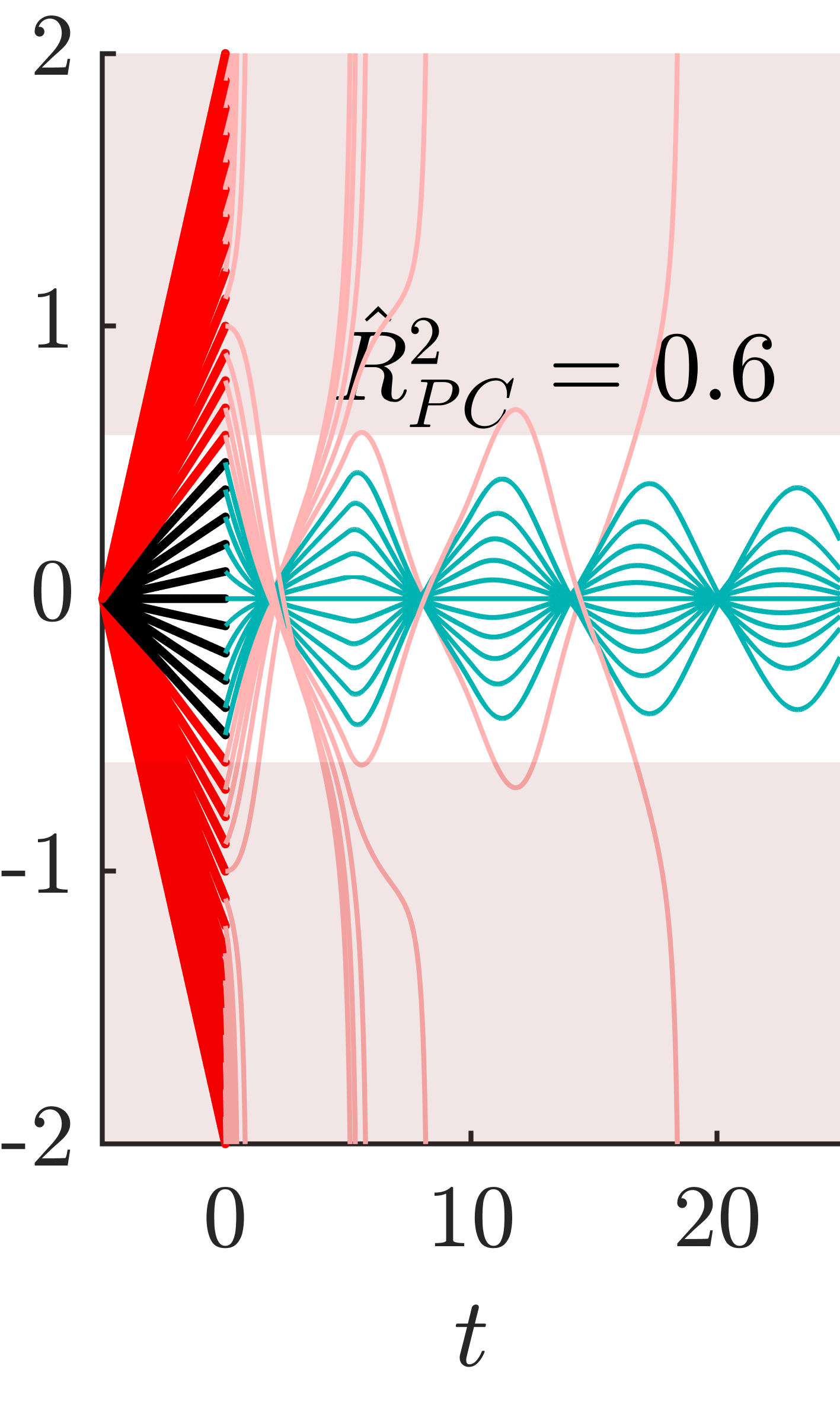}
}
\hfill
 \subfloat[\label{fig:scalar_step5}]{
\includegraphics[]{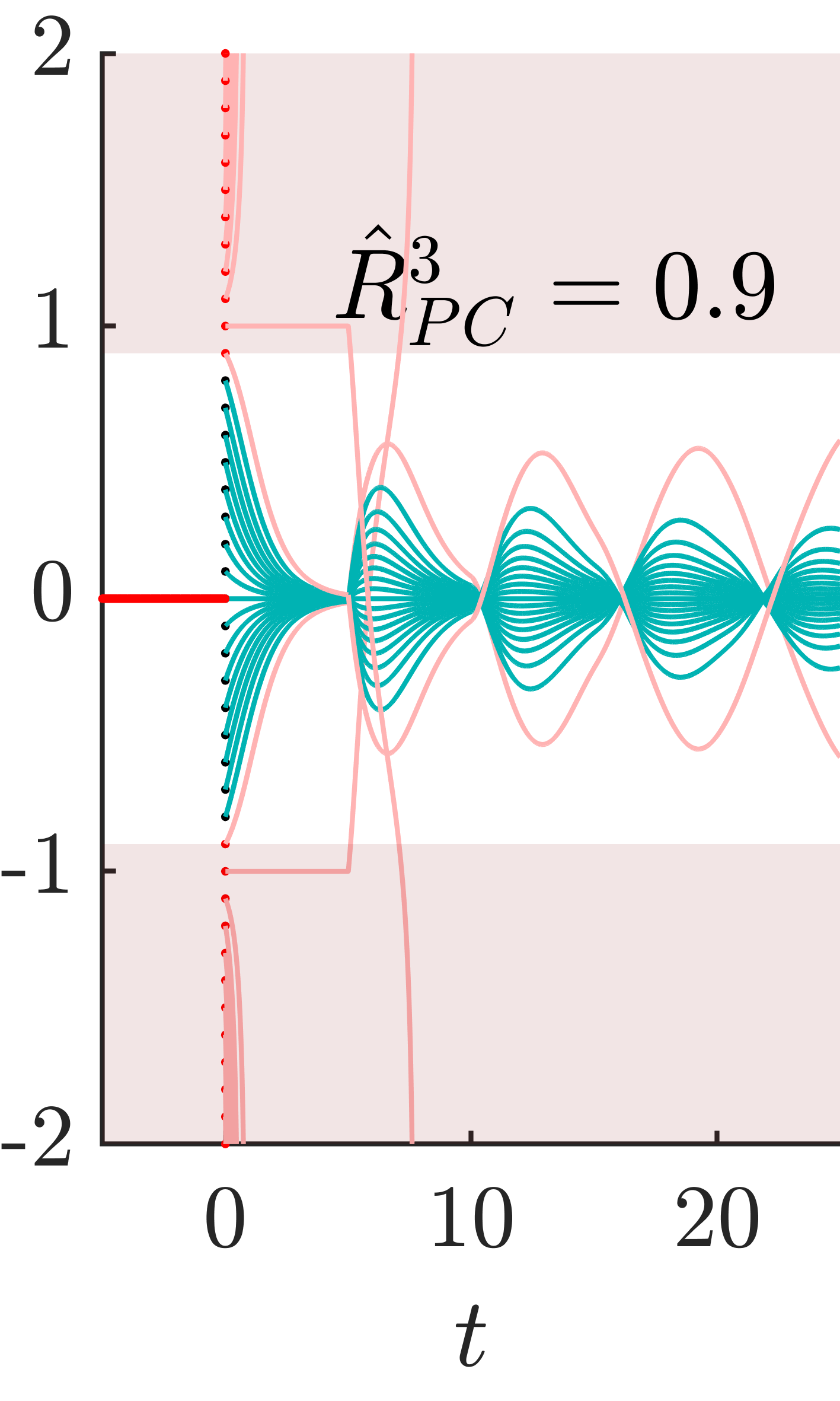}
}
\hfill
 \subfloat[\label{fig:scalar_lin_dec5}]{
\includegraphics[]{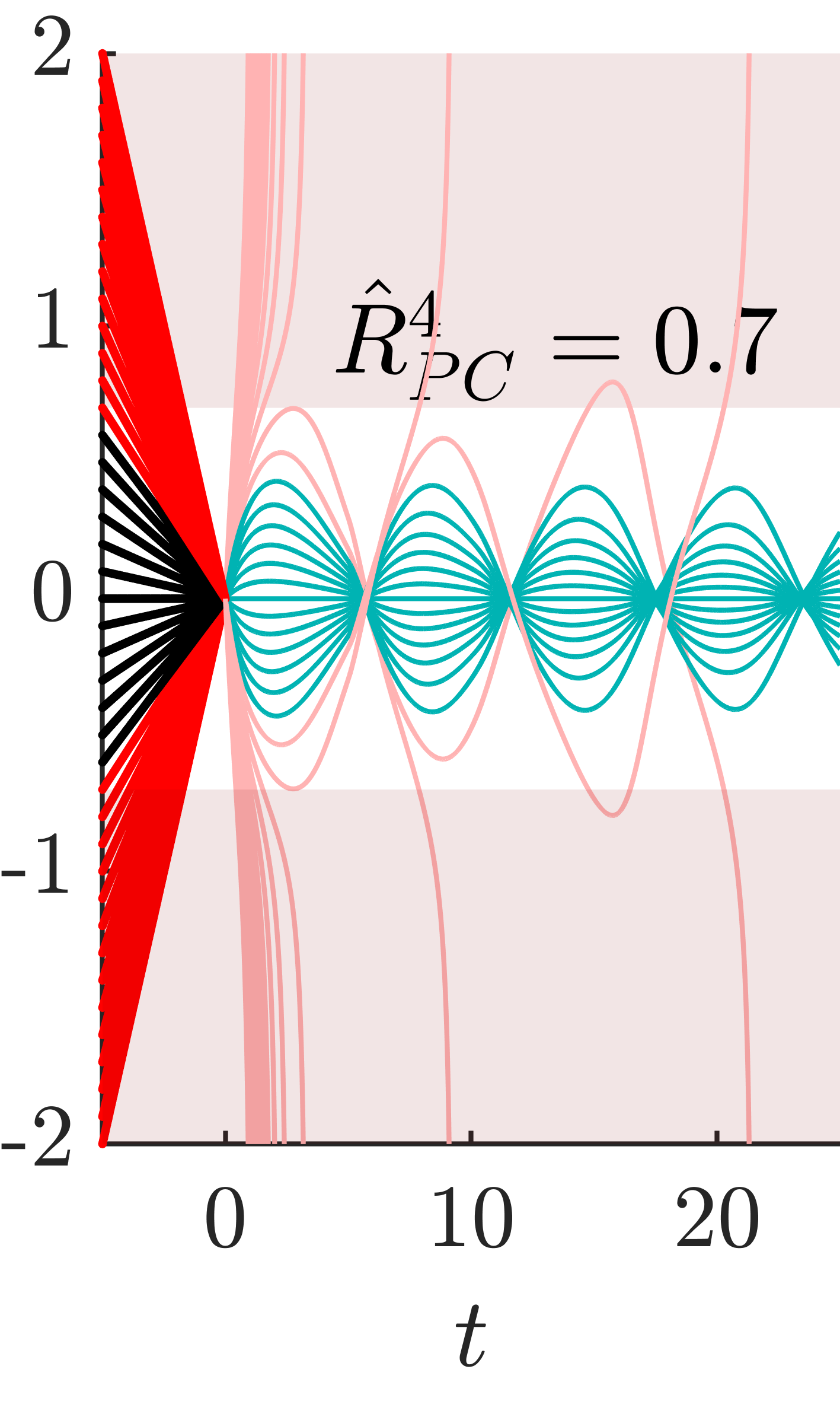}
}
\begin{minipage}[b]{1.75cm}
~
\end{minipage}
\else
\centering
 \subfloat[]{
\includegraphics[width=0.23\textwidth]{fig_E_tau_1_i_1.png}
}
\hfill
 \subfloat[]{
\includegraphics[width=0.23\textwidth]{fig_E_tau_1_i_2.png}
}
\hfill
 \subfloat[\label{fig:scalar_step1}]{
\includegraphics[width=0.23\textwidth]{fig_E_tau_1_i_3.png}
}
\hfill
 \subfloat[\label{fig:scalar_lin_dec1}]{
\includegraphics[width=0.23\textwidth]{fig_E_tau_1_i_4.png}
}
\hspace{-2.3cm}
\begin{minipage}[b]{2.2cm}
\small
\input{fig_E_Legend.pdf_tex} 
\vspace{10em}
\end{minipage}
\\
 \subfloat[]{
\includegraphics[width=0.23\textwidth]{fig_E_tau_5_i_1.png}
}
\hfill
 \subfloat[]{
\includegraphics[width=0.23\textwidth]{fig_E_tau_5_i_2.png}
}
\hfill
 \subfloat[\label{fig:scalar_step5}]{
\includegraphics[width=0.23\textwidth]{fig_E_tau_5_i_3.png}
}
\hfill
 \subfloat[\label{fig:scalar_lin_dec5}]{
\includegraphics[width=0.23\textwidth]{fig_E_tau_5_i_4.png}
}
\fi
\caption{\label{fig:scalarExample} Solutions $x(t;\phi)$ of (\ref{eq:scalarExample}) 
for constant \rev{$\phi(\theta)\equiv p$, $p\in \mathbb R$}, linear / increasing \rev{$\phi(\theta)=p (\theta+\tau)/ \tau$}, jump \rev{$\phi(\theta)=p \chi_{\{0\}}(\theta)$} and linear / decreasing \rev{$\phi(\theta)=p \theta/\tau$} initial functions ${\phi(\theta), \theta\in[-\tau,0]}$ and two different time delay values ((a)-(d): $\tau=1$, (e)-(h): $\tau=5$)}
\end{figure*}
Four families of initial functions in $PC([-\tau,0],\mathbb R)$ are considered.
 We are interested in the radius of attraction $R_{PC}$, i.e.\ the maximum uniform norm $\| \phi \|_C$  below which all solutions still converge to the zero equilibrium. Clearly, the tested set of initial functions yields $R_{PC}\leq\revi{\hat R_{PC}(\Phi)}=\min_i \hat R_{PC}^i =1.1$ if $\tau=1$ and $\revi{\hat R_{PC}(\Phi)}=0.4$ if $\tau=5$.
In the case of a delay value $\tau=1$ the family of step functions (Fig. \ref{fig:scalar_step1}) shows the smallest attracted set and the family of decreasing linear functions with $\phi(0)=0$ (Fig. \ref{fig:scalar_lin_dec1}) is converging for all tested functions. However,  in the case of $\tau=5 $ the step functions (Fig. \ref{fig:scalar_step5}) even show the largest attracted set, whereas the decreasing linear functions (Fig. \ref{fig:scalar_lin_dec5}) already diverge for smaller $\|\phi\|_C$ values. Hence, the class of functions \rev{that} leads to the best estimation is not  even constant for one type of RFDE under parameter variations. As a consequence, there is no obvious relation between the class of initial functions and the conservativeness of estimations on $R_X$. 
\end{example}
\subsection{Generalizations of Domain and Radius of Attraction}
The radius of attraction $R_X$ is dictated by the smallest initial function \rev{that} leads to a non-zero-convergent solution -  no matter how relevant this initial function might be. As schematically shown by the intruding blue line in Figure \ref{fig:Measures}, the most critical element in the complement $\mathscr D_X^c$ is possibly only a small exception within the closure of the domain of attraction. 
\\
\\
 For ODEs, \rev{there are} generalizations of the domain of attraction. \rev{Such generalizations are either} based on topological concepts, like  quasi-stability \rev{domains} \cite{Zaborszky.1988,Chiang.1996,Chiang.2011}, or based on measure theoretic concepts, like almost global stability \cite{Rantzer.2001} (see Appendix). 
\rev{These generalizations form supersets of the exact domain of attraction. They ignore that some small sets in the interior actually do not lead to the desired zero-convergence, and\rev{,} hence\rev{,} do not belong to the exact domain of attraction. As a consequence, inner estimations of the generalized domain of attraction might be considerably larger than inner estimations of the exact domain of attraction - for the price of marginal exceptions \rev{that} might be overseen.}
\\
\\
In this light, we propose definitions of generalized radii of attraction $R_X^{gen}$ in time\rev{-}delay systems in the \rev{A}ppendix. It can be expected that upper bounds $\hat R_X$ \rev{that} are based on arbitrarily selected initial functions, give only an estimation of the such introduced $R_X^{gen}$ (Figure \ref{fig:Measures}).
\rev{However, $R_X^{gen}$ might even be the actual number of interest: if a statement about generic convergence to the attractor suffices, $R_X$ would be unnecessarily restrictive}.

\begin{remark}[No Lebesgue Measure in Infinite Dimensional Spaces]\label{rem:noLebesgueMeasure}
The notion of a property to be valid for almost every (a.e.) point in the state space or a subdomain is convenient in the context of ordinary differential equations. If not otherwise stated, exclusions of Lebesgue measure zero are meant. However, there is no analog of the Lebesgue measure on infinite dimensional spaces (Sullivan \cite{Sullivan.2015}, Theorem 2.38). As a consequence, there is no universally accepted notion of ``almost every function'' \cite{Ott.2005}. 
\end{remark}
\section{On the Selection of the State Space} \label{sec:SelectionOfTheBanachSpace} 
There are various normed spaces - predominantly Banach spaces - \rev{that} come into question as state space for RFDEs. In literature, the space of continuous functions $C([-\tau,0],$ $\mathbb R^n)$ endowed with the uniform norm is mostly used \cite{Hale.1993,Diekmann.1995,Bellman.1963,Halanay.1966,LaSalle.1976}. Also frequent \cite{Krasovskii.1964,Delfour.1972,Bernier.1978,Curtain.1995} are spaces $M^p=  \mathbb R^n \times L^p([-\tau,0],\mathbb R^n)  $ with  $1\leq p \leq\infty$, especially $M^2$. A $L^p$ space alone would not be sufficient, since a RFDE requires the point value $x(0)=\phi(0)\in \mathbb R^n$ at $t=0$ to start with, which is a set of Lebesgue measure zero. In $M^p$ both, the initial function $\phi\in L^p$ and the initial value $\phi(0)\in \mathbb R^n$, are given separately.  Space selections in literature seem at most to be aligned to the mathematical theory. However, due to non-equivalence of norms in infinite dimensional spaces, the selection of the normed space has far reaching consequences and should be well considered. 
\subsection{Consequences of the Selection}\label{sec:consequecesSpace}
According to (\ref{eq:attractive}), there is a non-zero radius of attraction $R_X\neq 0$ in any normed space\rev{,} in which the equilibrium under consideration can be proven to be attractive.
However, the interpretation as well as the size of the radius of attraction depend heavily on the chosen norm.   For instance,
the radius of attraction might be a measure for maximum allowed pointwise perturbations ($\|\cdot\|_C$) or address derivations in energy ($\|\cdot\|_{L^2}$ as part of $\|\cdot\|_{M^2}$). Thus, the technical or physical application is of interest.
\\
\\
The space also determines which initial functions come into question. Consider delayed input in a control law. It requires a certain initialization until first data is received. As soon as the signal becomes available, this initialization is replaced by the actual value, such that a jump discontinuity occurs unless the values are equal by chance. If the start-up process or a communication interruption is only one of many possible scenarios, not only the finite dimensional case of initial jumps, but an infinite dimensional function space, which allows such discontinuous functions, has to be taken into account.
This example shows that the Banach space of continuous functions $C([-\tau,0],\mathbb R^n)$ is frequently not sufficient. However, the Banach space of piecewise continuous functions $PC([-\tau,0],\mathbb R^n)$ equipped with the uniform norm is a valid alternative for RFDEs \cite{Krasovskii.1963,Kharitonov.2013} and does not alter the meaning of the radius of attraction.
Obviously, other spaces like $M^2$ also allow discontinuous functions, but the interpretation of the radius of attraction is completely different.
\begin{remark}[Invariance under Time Scale Transformations]
Time scale transformations are convenient to get dimensionless parameter values in appropriate scales and to lower the number of parameters. However, by affecting the time delay value, time scale transformations do also change the domain of the initial function. What are the consequences for the radius of attraction? For instance, a delay normalizing transformation 
is achieved by $t=\tau \hat t$, $x(\tau \hat t)=\hat x (\hat t)$ and $\theta=\tau \hat \theta$, $\phi(\tau \hat \theta)=\hat \phi (\hat \theta)$, respectively. Hence, $\dot x = f(x(t),x(t-\tau)), x_0=\phi$ becomes 
\begin{align}
\frac{\textrm d \hat x}{\textrm d \hat t}(\hat t)&=\tau f(\hat x(\hat t),\hat x(\hat t-1)), &&\hat t>0 \\
\hat x(\hat \theta)&=\hat \phi(\hat \theta):=\phi(\tau \hat \theta), &&\hat \theta\in[-1,0]. \nonumber
\end{align}
The radius of attraction with respect to the transformed system refers to $\hat\phi(\hat \theta)$. 
Obviously, by 
\begin{align}
\|\hat \phi\|_{C[-1,0]} = \sup_{\hat \theta\in[-1,0]}\phi(\tau \hat \theta) = \| \phi\|_{C[-\tau,0]}
\end{align}
the uniform norm of the initial function and thus the radius of attraction in $C$ or $PC$ is invariant under such transformations.
However, for instance the norm in $M^2=\mathbb R^n \times L^2$ is not since
\begin{align}
\| \hat \phi\|_{M^2[-1,0]}
&=\|\hat \phi(0)\|_2 + \left(\int_{-1}^0 \|\phi(\tau \hat\theta)\|_2^\rev{2}  \textrm d \hat \theta\right)^{\frac 1 2}
\nonumber\\
&=\| \phi(0)\|_2+\sqrt{\frac 1 \tau} \left(\int_{-\tau}^0 \|\phi(\theta)\|_2^\rev{2}  \textrm d  \theta\right)^{\frac 1 2}
\nonumber \\
&\neq \|\phi\|_{M^2[-\tau,0]}.
\end{align}
As a consequence, the radius of attraction in $M^2$ of the transformed system has no meaning for the original RFDE.
\end{remark}
\subsection{Quotient Space to Incorporate Differently Delayed States}
In physical or technical problems it is common that some state components occur in a delayed form, but others contribute only by their instantaneous values (see \rev{e.g. \cite{Dombovari.2008, Scholl.2019b, Minorsky.1948} or} Example \ref{ex:swingEq} below).
Hence, the history of some state variables $x^{\textrm{\scriptsize II}}(t)\in \mathbb R^{q}$, $q<n$ in $x(t)=[x^{\textrm{\scriptsize I}} (t)^\top, x^{\textrm{\scriptsize II}}(t)^\top]^\top\in \mathbb R^n$ does not influence the system dynamics at all, i.e.\
\begin{align}
\begin{bmatrix} 
\dot x^{\textrm{\scriptsize I}}(t) \\ \dot x^{\textrm{\scriptsize II}}(t)
\end{bmatrix} 
&= f(x^{\textrm{\scriptsize I}}(t), x^{\textrm{\scriptsize II}}(t), x^{\textrm{\scriptsize I}}(t-\tau) ), && t>0 \\
\begin{bmatrix} x^{\textrm{\scriptsize I}} (\theta) \\
x^{\textrm{\scriptsize II}}(\theta)\end{bmatrix} 
&= \begin{bmatrix}\phi^{\textrm{\scriptsize I}} (\theta)\\ \phi^{\textrm{\scriptsize II}}(\theta)\end{bmatrix} , && \theta\in[-\tau,0]. \nonumber
\end{align}
Usually, there is no distinction in the domains of the initial function components, such that $\textrm{dom}(\phi^{\textrm{\scriptsize II}})=\textrm{dom}(\phi^{\textrm{\scriptsize I}})=[-\tau,0]$. Thereby, it is ignored that some of the required initial data $[\phi^{\textrm{\scriptsize I}{\top}}, \phi^{\textrm{\scriptsize II}{\top}}]^\top=\phi\in X([-\tau,0] , \mathbb R^n)$ remains unused (cmp.\ \cite{Lee.1982}). Thus, the Banach space norm $\|\phi\|_X$ depends on irrelevant values of $\phi^{\textrm{\scriptsize II}}$ for $\theta<0$. Consider $\|\phi^{\textrm{\scriptsize II}}\|_X>\|\phi^{\textrm{\scriptsize II}}(0)\|$. Obviously, the radius of attraction as a norm criterion on $\|\phi\|_X$ would be unnecessarily restrictive. 
\\
\\
This situation motivates the use of 
\begin{align}
\|\phi\|_{Q}:=
\left \|
\begin{array} {c}
\| \phi^{\textrm{\scriptsize I}}\|_X\\
 \| \phi^{\textrm{\scriptsize II}}(0) \| 
\end{array} 
\right\|
\label{eq:QNorm}
\end{align}
with a certain choice of the outer norm  (which is not that important by equivalence of norms in finite dimensions). 
However, $\|\cdot\|_{Q}$ is only a seminorm in $X$. Therefore, we propose to introduce the quotient space
\begin{align}
Q:=X / \{\phi \in X: \phi^{\textrm{\scriptsize I}}(\theta)\equiv 0_{{n-q}}, \phi^{\textrm{\scriptsize II}}(0)=0_{q} \}
\label{eq:QuotientSpace}
\end{align}
endowed with the norm  $\|\cdot\|_{Q}$.
It regards functions to be in the same equivalence class if they differ in values $\phi^{\textrm{\scriptsize II}}(\theta)$ for $\theta<0$.
Hence, these values are fully ignored in the sufficient criterion for zero-convergence  $\|\phi\|_{Q}<R_Q$.
\section{Estimations by Time\rev{-}Forward Simulations}\label{sec:SimApproach}
The state space $X$ is infinite dimensional.
However, a simulative approach means to select certain initial functions. 
Thereby, Example \ref{ex:scalarExample} demonstrates that an obvious relation between the class of initial functions and the conservativeness of the overestimation of 
$\hat R_X$ does not exist.  Hence, the question arises, which initial functions should be taken into account.
\\
\\
The physical or technical context might already motivate certain initial functions.
\\
{\itshape (i)} Predefined parameterized initial functions might be of particular importance, e.g., jump functions for delayed controller input (cf. Section \ref{sec:consequecesSpace}). \\
{\itshape (ii) }In the case of a switched system, the initial function should correspond to the dynamics of the previous system definition. \\
{\itshape (iii) }There is also the proposal to use solution segments built from the solutions of the system without delayed terms  \cite{Daza.2017}. 
\\
\\
If application-adapted functions occur exclusively, only the corresponding \rev{subset} of $X$ is of interest. Consequently, the intersection of the domain of attraction with this \rev{subset} suffices. Otherwise, a large span of different families of initial functions should be tested. 
In the following section various possibilities of initial functions are classified.  
The norm of the smallest found initial function \rev{that} does not belong to the domain of attraction is of interest. According to Definition \ref{def:hat_R_X}, it gives the desired overestimation $\hat R_X\geq R_X$ of the radius of attraction.
\\
\\
\rev{
Results in the present paper are generated by the Matlab solver \texttt{dde23}, which is based on a Runge-Kutta triple \cite{Shampine.2001}.  \revi{Stiff problems might require alternative approaches \cite{Agrawal.2004}. For further details on numerical methods we refer to \cite{Bellen.2003}}.
No matter which solver is chosen, a tuning of the discretization step size or error bounds is inevitable. Furthermore, the criterion according to which a numerical solution is assumed to approximate a zero-convergent solution has to be selected. Consider a numerical solution over the finite time span $t\in[-\tau,t_\revi{\textrm{num}}]$ for a given initial function $\phi$. A convergence decision must be based on the approximation of the last state $x_{t_\revi{\textrm{num}}}(\cdot;\phi)$, i.e.\ on the numerical data for $t\in [t_\revi{\textrm{num}}-\tau, t_\revi{\textrm{num}}]$. 
\revi{Let $r(t_{\textrm{num}};\phi):=\|\hat x_{t_\revi{\textrm{num}}}(\cdot;\phi)\|_X $ be the respective norm of the interpolated numerical values.} 
The numerical domain of attraction is
\begin{align} \label{eq:hat_D_X_numerical}
\hat{\mathcal D}_{X}\stackrel{\textrm{def}}=\{\phi \in X:  \revi{r (t_\textrm{num}; \phi)}<\delta_{\textrm{num}} \}, 
\end{align} 
\revi{cmp.\ (\ref{eq:attractive}), }with $\delta_{\textrm{num}}>0$ suitably chosen. \revi{If a lower bound on the radius of attraction $\check R_X\leq R_X$ is available $\delta_{\textrm{num}}$ should be based on this result. Furthermore, $\delta_{\textrm{num}}$ can be used as a termination criterion in the simulation to lower the numerical effort.}
In the following, with a slight abuse of notation, we identify the numerical domain of attraction $\hat {\mathcal D}_{X}$ with $\mathcal D_X$.
}
\rev{\begin{remark}[Numerical Errors]
All results on the domain of attraction that are based on time simulations \cite{Leng.2016, Yan.2019, Schafer.2015,Shang.2009, Aguirregabiria.1987, Daza.2017,Losson.1993,Taylor.2007} rely on the distinction between convergence and divergence from numerical results. However, this asymptotic behavior $t\to \infty$ is usually not covered by any numerical bounds, and thus the classification cannot be considered as proven. Concerning the radius of attraction, trajectories wrongly classified as convergent will not contribute to the upper bound estimation, but trajectories wrongly classified as non-zero-convergent might give a too small upper bound. While from a mathematical point of view this is not satisfactory, it should be noted that if the numerical errors (e.g.\ numerical damping) are able to generate non-convergence for a given initial function, then comparable errors in system parameters (e.g.\ physical damping), equation structure, or external perturbations will do so a fortiori. 
\end{remark}}
\subsection{Primary Initial Functions}\label{sec:primary}
We use the term primary initial functions to describe simple functions $\phi^k(\, \cdot \, ;p)$, $k\in\{1,\ldots, \rev{K}\}$ like constant, jump, polynomial, or trigonometric functions, which are parameterized in $p\in \mathbb R^{m}$. Denote the set of all elements in a family $\{\phi^k\}_p$ of initial functions of type $k$
 shortly as
\begin{align}
\Phi^k\stackrel{\textrm{def}}=\left\{ \phi^k(\,\cdot \,;p)\in X : p\in \mathbb R^{m} \right \}
\label{eq:initFuncFamily}
\end{align}  
and the set of all considered functions as 
\begin{align}
\Phi\stackrel{\textrm{def}}=\bigcup_k \Phi^k.
\label{eq:allPhi}
\end{align}
There are few considerations of domains of attraction in literature. However, the existing ones are mostly based on scalar linear  or  sinusoidal  initial functions dependent on one  \cite{Losson.1993} or two  \cite{Daza.2017,Taylor.2007} parameters $(p_1,p_2) \in \mathbb R^2$. 
In the latter case a graphical representation of the $(p_1,p_2)$ plane is common. 
By marking those parameter combinations \rev{that} result in an attractor-convergent solution, the finite dimensional intersection $\Phi^k \cap \mathcal D_X$ of the domain of attraction with the chosen family of primary initial functions becomes visible (black pixels in Figure \ref{fig:ROA_pixel}). 
\begin{figure} 
\centering
 \subfloat[\label{fig:DOA_pixel_const} $\phi^1_2(\theta)\equiv p_2$]{
\includegraphics[width=0.47\linewidth]{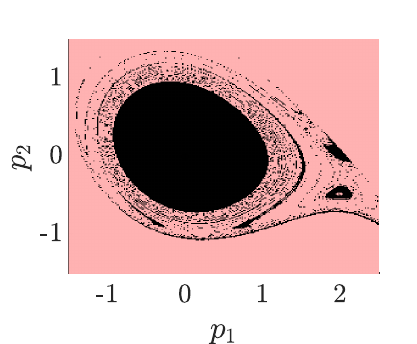}
}
\hfill
 \subfloat[$\phi^2_2(\theta)=p_2 \,\chi_{\{0\}}(\theta)$]{
\includegraphics[width=0.47\linewidth]{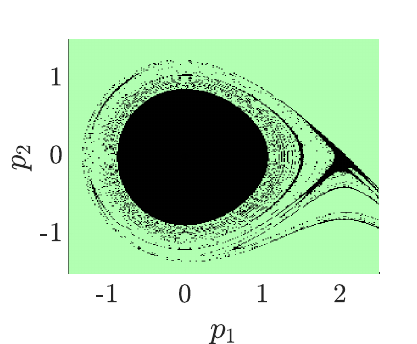}
}
\\
 \subfloat[$\phi^3_2(\theta)=p_2\cos\left(\frac{4\pi}{\tau} \theta\right)$]{
\includegraphics[width=0.47\linewidth]{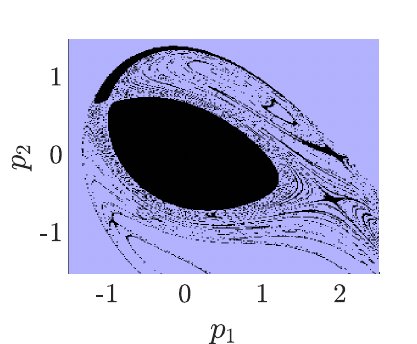}
}
\hfill
 \subfloat[$\phi^4_2(\theta)=p_2\sin\left(\frac{4\pi}{\tau} \theta\right)$]{
\includegraphics[width=0.47\linewidth]{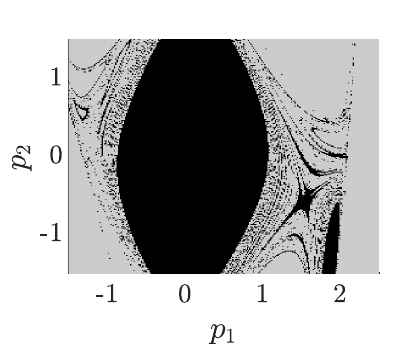}
}
\caption{\label{fig:ROA_pixel} Intersection of the domain of attraction with certain families of primary initial functions $\Phi^k \cap \mathcal D_X$. Black pixels indicate parameter combinations \rev{that numerically} lead to zero-convergent solutions \rev{in (\ref{eq:delayedSwingEq}), (\ref{eq:param_swing_eq})}, i.e.\ \rev{for black $p_1$,$p_2$ combinations }$\phi^k(\,\cdot\,;[p_1,p_2]^\top)\in \rev{\hat{\mathcal D_X}}$ \rev{holds}.
}
\end{figure}
\\
\\
These plots are by far more computationally expensive than phase plots for ordinary differential equations. By the semigroup property, each point $x(t)$ of an ODE solution can be seen as new initial point and does not need to be checked again for zero-convergence. However, the states $x_t$ in a delay system are whole solution segments \rev{that} are probably not in the chosen subset of initial functions $x_t\not\in \Phi^k$. 
In view of avoiding unnecessary computational burden, the examination of the whole parameter space is not undertaken:
to derive estimations 
\revi{$\hat R_X(\Phi)$} 
for the radius of attraction, it is sufficient to increase the parameters until zero-convergence fails. This can be done in a \rev{circular or} star-like scheme, 
in which the parameter modulus 
is increased (Figure \ref{fig:p1_p2_plane} below). 
 Of course, symmetries should be taken into account to reduce the computational effort further. In addition, the problem is well suited for parallel computing. 
\begin{remark}[Fractal Domain of Attraction Boundaries]\label{rem:fractal}
The boundary of a domain of attraction is possibly fractal \cite{Aguirregabiria.1987,Daza.2017,Losson.1993,Taylor.2007}. Fractality results in a high sensitivity of simulative results on parameter variations in the initial function. However, it does not influence the validity of gained upper bounds on the radius of attraction. 
\end{remark}
\noindent It should be noted that physics might give restrictions on how the state components are related to each other.
\begin{remark}[Higher Order Derivatives]\label{rem:scalarHigherOrder}
Assume that the state space representation $\dot x(t)=F(x_t)$ with $x(t)\in \mathbb R^n$ stems from a system description in a scalar variable $y(t)\in \mathbb R$ with higher order derivatives 
\begin{align}
y^{(n)}(t)+g\big(&y^{(n-1)}(t),\ldots,\dot y(t),y(t),
\nonumber \\
&y^{(n-1)}(t-\tau),\ldots,\dot y(t-\tau),y(t-\tau)\big) = 0.
\end{align}
Then, the transformation to $x$, e.g., $x=[y,\dot y, \ldots, y^{(n-1)}]^\top$,  goes along with restrictions to the space of initial functions. Hence,  the $n$-dimensional vector function $\phi(\,\cdot\,;p)=x_0(\,\cdot\,;p)$ cannot freely be chosen. Instead, only a scalar initial function $y_0(\,\cdot\,;p)$ for $y(t)$ must be specified. For instance, with the transformation from above, this results in
\begin{align}
\phi(\theta;p)=[y_0(\theta;p), \dots, y_0^{(n-1)}(\theta;p)]^\top.
\end{align}
\end{remark}
\subsection{Secondary Initial Functions} \label{Sec:SecondaryInitialFunctions}
We use the term {\itshape secondary initial functions} for functions \rev{that} arise a posteriori from previous simulation results. 
As described in Section \ref{sec:primary},  the semigroup property $x_{s+t}(\cdot;\phi)=x_t(\cdot; x_s(\cdot;\phi))$ does not help much with respect to considerations of $\Phi^k \cap \mathcal D_X$, which are addressed in graphical representations like Figure \ref{fig:ROA_pixel}. Nevertheless, it does help for considerations of the whole domain of attraction $\mathcal D_X$, which indeed is addressed by the radius of attraction. Actually, all segments of simulated solutions $x(t)$ over an interval $t\in [s-\tau,s]$, i.e.\ states $x_{s}$, can be interpreted as further initial functions $\phi=x_{s}$. All these intermediate results have automatically been tested for zero convergence as well since
\begin{align}
\lim_{t\to\infty} \|x_t(\cdot; \phi(\theta;p))\|_X=& \lim_{t\to\infty} \|x_t(\cdot, x_{s}(\cdot; \phi(\theta;p)))\|_X.
\end{align}
As a consequence, it is worthwhile to further examine divergent solutions. Already Figure \ref{fig:scalar_step5} indicates that a diverging trajectory might temporarily get closer to the equilibrium than the initial function itself. 
Figure \ref{fig:Secondary} shows how an 
oscillating solution segment with a smaller norm \rev{can} arise from a constant primary initial function. 
\begin{figure} 
\centering
\includegraphics[width=\linewidth]{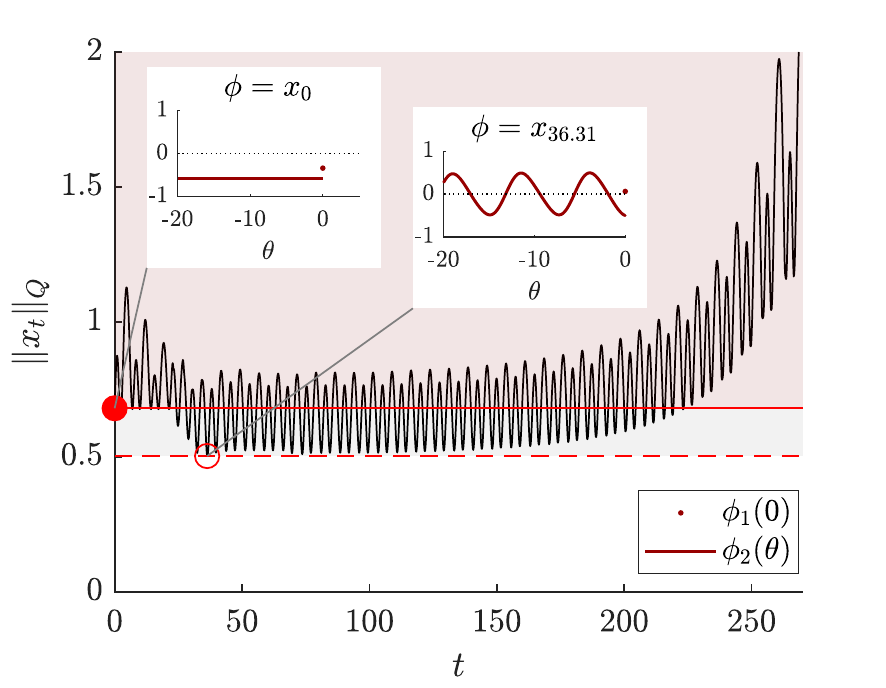}
\caption{\label{fig:Secondary}
Norm over time, i.e.\  $t\mapsto \|x_t\|_Q$, for a diverging solution of  (\ref{eq:delayedSwingEq}), (\ref{eq:param_swing_eq}).   
Not $\phi=x_0$ but a state $x_t$, $t>0$ takes the minimum norm value and thus is called secondary initial function.
The components $\phi=[\phi_1,\phi_2]^\top$ of both are shown in subfigures, where the point $\phi_1(0)$ fully represents the equivalence class of $\phi_1(\theta)$ in $X=Q$ as defined in (\ref{eq:QNorm}). 
\rev{The constant primary initial function $\phi=x_0$ corresponds to a red point 
in Figure \ref{fig:DOA_pixel_const} and Figure \ref{fig:p1_p2_plane}.}
}
\end{figure}
We denote the minimum norm value of diverging trajectories (red dashed line in Figure \ref{fig:Secondary}) as
\begin{align}
\revi{\hat R_X^{\mathcal T} (\Phi)}\stackrel{\textrm{def}}=\min_{t\geq 0}\{\|x_t(\cdot;\phi)\|_X: x(t;\phi) \not \to 0, \phi \in \Phi \}, 
\label{eq:secondary}
\end{align}
where $\Phi$ is the set of primary initial functions (\ref{eq:allPhi}).
Obviously, without a noteworthy need of additional computational effort, 
\revi{$\hat R_X^{\mathcal T} (\Phi)$} can only be equal or smaller than the estimation gained by primary initial functions $ \revi{\hat R_X(\Phi)}$. Indeed,  in Example \ref{ex:swingEq} \rev{below},  a tighter upper bound on the radius of attraction than the minimum norm of all considered primary initial functions will be achieved. 
\\
\\  
\subsection{Extended Primary Initial Functions} 
A {\itshape basis extension} can give further improvements of the estimation for the radius of attraction. Thereby, we mean to construct 
\begin{align}
\phi(\theta;p)=\sum_{i=1}^d p_i b_i(\theta)
\end{align}
with basis functions $b_i\in X([-\tau,0],\mathbb R)$ up to order $d>0$. The difference to primary initial functions (Section \ref{sec:primary}) lies in the higher number of parameters $[p_1,\ldots,p_d]=:p\in \mathbb R^{n\times d}$ and thus in the computational effort. 
This ansatz is also taken by Leng et al.  \cite{Leng.2016} who are not interested in an estimation for $R_X$ but in basin stability. The latter is up to normalization approximated by the percentage of those randomly chosen parameter values $p$ in a predefined set $p\in \mathcal B_d \subset \mathbb R^{n \times d}$ \rev{that} lead to a zero-convergent solution (Section \ref{sec:Introduction} (ii)). 
For instance, trigonometric, Legendre or Bernstein polynomials can be used as basis functions.
Of course, the results allow again to derive secondary initial functions.
\rev{
\begin{remark}[Approximation properties]
Similarily to most inner approximations of domains of attraction, no statement about the tightness of the derived upper bound can be given. Neither a statement about convergence for extended primary initial functions is possible.
Indeed, arbitrarily close approximations of initial functions in $X=C([-\tau,0],\mathbb R^n)$ are achievable if  polynomials with sufficiently high degree are considered (Stone-Weierstrass theorem). Moreover, if continuous dependence on the initial conditions is ensured (Hale and Verduyn Lunel \cite{Hale.1993}, Ch.2, Theorem 2.2), then a sufficiently close approximation of an element in $\mathcal D_X$ will also reach the ball $\mathcal B_X(0;\delta_a)$ of (\ref{eq:attractive}) in finite time.  Consequently, zero-convergence follows and thus the approximation also belongs to $\mathcal D_X$. However, this argument only holds for elements in $\mathcal D_X$, while the radius of attraction is determined by those initial functions \rev{that} lie on the boundary $\partial \mathcal D_X$. Asymptotic behavior of an element in this boundary set might not be reflected by any approximation. Hence, in view of the arbitrary difference between the radius of attraction $R_X$ and the potentially larger radius of attraction $R_X^{qs}$ w.r.t.\ the quasi-stability domain $\textrm{int}(\overline {\mathcal D_X})$ (Definition \ref{def:gen_domains}), even a large set of tested initial functions might not result in tight estimations of the radius of attraction. 
In the end, the combination of an analytically derived lower bound $\check R_X$ and a numerically derived upper bound $\hat R_X$, such that \ $R_X\in[\check R_X, \hat R_X]$, is most meaningful. The gap $\hat R_X-\check R_X$ provides a tightness bound for both estimations.
\end{remark}
}

\section{Estimations Based on Bifurcation Analysis}\label{sec:BifApproach}
A positively invariant or even invariant set with some distance to the attractor cannot be contained in the domain of attraction and thus yields an upper bound on the radius of attraction. 
\begin{proposition}
[\label{prop:invSet_notInDOA}Upper Bound on the Radius of Attraction Based on a Positively Invariant Set]
Assume $S$ is a positively invariant set and $\overline S \cap \{0_{[-\tau,0]}\} = \emptyset$. Then $S \cap \mathcal D_X=\emptyset$ and 
\begin{align}
R_X\leq\hat R_X^S:=\inf_{\phi\in S} \|\phi\|_X.
\end{align}
\end{proposition}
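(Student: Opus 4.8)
The plan is to prove the set-disjointness claim by contradiction and then read off the radius bound from the distance-to-complement characterization $R_X=\inf_{\phi\in X}\{\|\phi\|_X:x(t;\phi)\not\to 0\}$ recorded above.

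First I would make the hypotheses explicit. Positive invariance of $S$ means that for every $\phi\in S$ the forward trajectory exists and $x_t(\cdot;\phi)\in S$ for all $t\ge 0$. Assume, for contradiction, that there is some $\phi\in S\cap\mathcal D_X$. By definition of $\mathcal D_X$, the solution $x_t(\cdot;\phi)$ then exists on $[0,\infty)$ and $\|x_t(\cdot;\phi)\|_X\to 0$, i.e.\ $x_t(\cdot;\phi)\to 0_{[-\tau,0]}$ in $X$. Since each state $x_t(\cdot;\phi)$ belongs to $S$, the limit $0_{[-\tau,0]}$ is an adherent point of $S$, hence $0_{[-\tau,0]}\in\overline S$, which contradicts $\overline S\cap\{0_{[-\tau,0]}\}=\emptyset$. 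Therefore $S\cap\mathcal D_X=\emptyset$.

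For the second statement I would note that $S\cap\mathcal D_X=\emptyset$ gives $S\subseteq\mathcal D_X^{c}$, so every $\phi\in S$ satisfies $x(t;\phi)\not\to 0$. Consequently $S$ is contained in the feasible set of the infimum defining $R_X$, and monotonicity of the infimum under set inclusion yields $R_X\le\inf_{\phi\in S}\|\phi\|_X=\hat R_X^S$. In the degenerate case $S=\emptyset$ one has $\hat R_X^S=\inf\emptyset=+\infty$ and the inequality is vacuously true.

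I expect the only non-routine step to be the closure argument: one must be careful that positive invariance of $S$ genuinely supplies forward completeness of the solutions started in $S$ (so that the limit $t\to\infty$ is meaningful), and that norm convergence in $X$ of states that all lie in $S$ really forces $0_{[-\tau,0]}\in\overline S$ rather than merely in $S$ itself — this is precisely why the hypothesis is stated with $\overline S$ and not with $S$. Everything else is bookkeeping with the definitions of $\mathcal D_X$, $R_X$, and $\hat R_X^S$.
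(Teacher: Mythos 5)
Your proof is correct and follows essentially the same route as the paper: positive invariance keeps every state $x_t(\cdot;\phi)$ in $S$, so zero-convergence would force $0_{[-\tau,0]}\in\overline S$, contradicting the hypothesis, and the bound on $R_X$ then follows from the characterization $R_X=\inf_{\phi\in X}\{\|\phi\|_X: x(t;\phi)\not\to 0\}$. Your version merely spells out the closure argument and the vacuous case $S=\emptyset$ that the paper's one-line proof leaves implicit (note that forward completeness is already supplied by $\phi\in\mathcal D_X$, so positive invariance need not provide it).
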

\begin{proof} 
If $\phi \in S$ is an element of a positively invariant set $S$ with some distance to the attractor in $\phi_e=0_{[-\tau,0]}$, the defining property of the domain of attraction $\lim_{t\to\infty} \|x_t(\theta;\phi)\|_X=0 $ cannot be fulfilled.
\end{proof}
\noindent In the approaches described above, non-zero-convergent forward trajectories from time\rev{-}domain simulations represent this set $S$. Furthermore, these forward trajectories approach $\omega$-limit sets, which are themselves invariant sets (complete continuity of $F$ in (\ref{eq:RFDE}) and boundedeness of trajectories presumed; Smith \cite{Smith.2011}, Corollary 5.6). However, there might also be equilibria, limit cycles, invariant tori, or chaotic sets, which are hidden \cite{Dudkowski.2016} 
in a certain sense or unstable and thus will not be approached in time\rev{-}domain simulations. However, it is challenging to locate these limit sets and even to prove their existence. 
In the following, we try to get a priori knowledge of some of these sets. 
\begin{remark}[Number of Limit Cycles]
It should be noted that the procedure described below will not necessarily capture all limit cycles. Even in the supposedly simple case of planar ODEs with polynomial vector fields of degree $d>1$ no finite upper bound for the number of limit cycles has yet been found. This is known as Hilbert's 16th problem \cite{Ilyashenko.2002}. 
\end{remark}
\noindent
We propose to start with the delay-free, \rev{thus} finite dimensional and manageable system \rev{that} results in setting 
the delay \revi{value or the coefficient of the delayed terms} to zero. 
All additional equilibria, limit cycles, invariant tori, or chaotic sets have to emerge with an increase of 
\revi{these parameters}. 
Hence, we consider 
the delay parameter or the respective coefficient 
as bifurcation parameter in a bifurcation analysis (for bifurcation theory of RFDEs see \cite{Hale.1993,Diekmann.1995,Kazarinoff.1978,Hassard.1981,Janssens.2010}). 
It is well known, that the loss and regain of stability by parameter variation is generically accompanied by a Hopf bifurcation. A subcritical Hopf bifurcation means that an unstable limit cycle occurs for \revi{parameter} values at which the equilibrium is stable (Figure \ref{fig:limit_cycles}). In a finite dimensional system in the plane, the situation would be clear: the unstable limit cycle, which surrounds the asymptotically stable equilibrium, represents the boundary of the domain of attraction. The infinite dimensionality of time\rev{-}delay systems makes conclusions harder. However, by Proposition \ref{prop:invSet_notInDOA}  the knowledge about such limit sets can be used for estimations of the radius of attraction.
\begin{figure}
\centering
\subfloat[\label{fig:limit_cycle_x0}Initial state $\phi\in Q$ and limit cycle solution 
$x(t)={[x_1(t),x_2(t)]^\top}$ 
for $\tau=20$ with period $T=7.4$.]{
\includegraphics[width=0.97\linewidth]{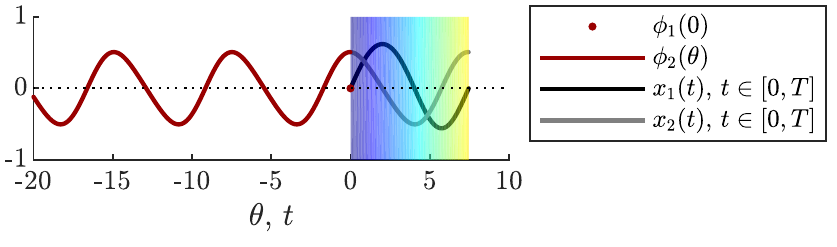}
}
\\
\subfloat[At delay values \rev{that} mark a regain of stability, subcritical Hopf bifurcations occur. For a certain delay value, the resulting limit cycle becomes visible in the $(x_1,x_2)$ plane. Coloring of limit cycles indicates time $t$ modulo period $T$.  Emerging limit cycles are only shown as long as the equilibrium is attractive.]{
\includegraphics[width=0.97\linewidth]{fig_limit_cycles.pdf}
}
\caption{\label{fig:limit_cycles} Unstable limit cycle in (\ref{eq:delayedSwingEq}), (\ref{eq:param_swing_eq}).}
\end{figure}
\\
\\
As in the ODE case, it is possible to formulate a boundary value problem (BVP) that is solved by the periodic solutions of a limit cycle. 
Periodicity of the whole solution $x(t)=x(T+t), t \in [-\tau,\infty)$ is by semigroup property already ensured, if $x(\theta)=x(T+\theta), \theta \in[-\tau,0]$, i.e.\ by equality of the two states $x_0=x_T$. Hence, instead of an initial condition $x_0=\phi$ the boundary condition $x_0=x_T$ is required, such that a two-point BVP over one period $t\in(0,T]$ has to be solved. The dependence of the boundary position on the beforehand unknown parameter $T$ can be removed by a time scale transformation. Consider $\dot x=f(x(t),x(t-\tau))$. With $t={\hat t} T$ and $x(\hat t T)=:z(\hat t)$ the two-point BVP over the normalized  period $\hat t\in(0,1]$ becomes  
\begin{subequations}\label{eq:BVP}
\begin{align}
\frac{\textrm d z}{\textrm d \hat t}(\hat t)&=Tf\left(z(\hat t),z\left(\hat t-\frac \tau T\right)\right), &&\hat t \in(0,1] \\
z({\hat \theta})&=z(1+{\hat \theta}), &&{\hat \theta}\in\left(-\frac \tau T,0\right].
\end{align}
\end{subequations}
Still, it is ambiguous which delay-spanning segment of the periodic solution is addressed. This translational symmetry has to be resolved by an additional phase shift condition leading to a unique solution.
\rev{However, the above boundary value problem can only be solved numerically if an appropriate initial estimation of the limit cycle is given. To this end, we consider a numerical bifurcation analysis with time delay as bifurcation parameter. Usually, the latter is used if parameter variations in a given range are of interest. However, as a byproduct, effects of parameter variation on domains of attraction might become visible \cite{Stepan.1991,Molnar.2017,Molnar.2018}. We are mainly interested in a fixed parametrization, which at a first glance is not related to bifurcation analysis. However,}
since in a bifurcation analysis limit cycles are tracked from their emergence, (\ref{eq:BVP}) is well initializable \rev{in an iterative manner based on successive \revi{parameter} steps until the parametrization of interest is reached.}

We make use of an implementation in the Matlab toolbox \texttt{DDE-BIFTOOL} \cite{Sieber.2014}, which is based on a collocation method with piecewise polynomials \cite{Roose.2007}. The latter provides a numerical approximation of the periodic solution. 
\\
\\
\rev{Already for simple lower order systems, the infinite dimensionality of time\rev{-}delay systems frequently results in complex phenomena and a rich variety of bifurcations \cite{Sprott.2007,Lakshmanan.2010,Molnar.2017,Molnar.2018,Scholl.2019b}.} Of course, \rev{bifurcations of the above described} limit cycles might occur with further delay parameter variation. For instance, limit cycles vanish if they collide in a fold bifurcation, a new limit cycle might branch off in a period doubling bifurcation, or an invariant torus, which surrounds a limit cycle, might stem from a Neimark-Sacker bifurcation \cite{Kuznetsov.1998}. A bifurcation analysis aims at detecting such phenomena. Approximations of Floquet multipliers allow to conclude stability. After the numerical analysis, if a certain fixed delay value is given, some occurring invariant sets can be concluded. The periodic solution leads to the limit cycle trajectory (set of all delay-width solution segments) \rev{that} we use as invariant set in Proposition \ref{prop:invSet_notInDOA}. Even more, the bifurcation analysis results in valuable insights how the system behavior changes with increasing delay. Thereby, also the variation of the derived upper bound for the radius of attraction with increasing delay values can be observed.
\begin{remark} [Stable Manifolds]\label{rem:StableManifold}
Generically, unstable equilibria or unstable limit cycles in RFDEs are saddle-type sets with a finite number of unstable eigenvalues or Floquet multipliers, respectively, but an infinite number of stable ones. Hence, there is also a stable manifold corresponding to the stable eigenfunctions of the linearized system \cite{Diekmann.1995}. Indeed, the stable manifold of a saddle-type set or the domain of attraction of a competing attractor is also a possible invariant set in Proposition \ref{prop:invSet_notInDOA}. However, these sets remain usually unknown. Nevertheless, knowledge from ODEs suggests that solutions nearby the stable manifold approach the saddle-type set temporarily in time\rev{-}forward simulations. As a result, the saddle-type set can to some exten\rev{t} already be reflected in secondary initial functions. This conjecture is supported by comparing the secondary initial function in Figure \ref{fig:Secondary} and the limit cycle solution in Figure \ref{fig:limit_cycle_x0}.
\end{remark}
\rev{
\begin{remark}[Relation to Bi- or Multistability]
A coexisting second attractor is an example of an invariant set in Proposition \ref{prop:invSet_notInDOA} and thus implies  that the first attractor has a finite radius of attraction $R_X<\infty$ and cannot be globally asymptotically stable. 
The coexistence of two or more attractors is commonly denoted as bistability or multistability, respectively \cite{Broer.2011,Losson.1993,Kuznetsov.2014b,Efimov.2015,Balanov.2005}. Thereby, it is a matter of definition whether - based on a space compactification - infinity can also be considered as an attractor in a non-dissipative system. All unbounded trajectories form its domain of attraction (without a meaningful radius of attraction).
Conversely, non-global asymptotic stability of an equilibrium  (i.e.\ $R_X<\infty$) does not imply the existence of a second attractor (including an attractor at infinity). To this end, consider the scalar ODE
$\{\dot x = -x $, if $x\leq 1$ and $\dot x = 0 $, if $x>1\}$, which has only one attractor at $x_e=0$ with domain of attraction $(-\infty,1)$ and radius of attraction $R_{\mathbb R}=1$ (the other equilibria are non-isolated and thus non-attractive).
\end{remark}
}
\section{Demonstrative Example}
The following example demonstrates the proposed approach for estimating the radius of attraction. 
Second order ordinary differential equations with periodic nonlinearity are relevant for various non-related applications like mechanical pendulum mechanisms, Josephson junction circuits, phase-locked loops, or synchronous generators  \cite{Khalil.2002}. At least in the latter case, the impact of delayed damping on the domain of attraction is of high technical interest \cite{Schafer.2015}.  Schaefer et al. \cite{Schafer.2015} give a simulative analysis of basin stability for such a delayed swing equation.
\begin{example}[Swing Equation with Time Delay]\label{ex:swingEq}
Consider the swing equation or constantly driven pendulum equation with additional delayed damping \rev{described by $\ddot y(t)+a  \dot y(t)+\tilde a \dot y(t-\tau) + \sin(y(t))=w$. A state space representation in $x=[y-y_e,\dot y]$, such that the equilibrium  $y_e:=\arcsin(w)$ is shifted to the origin $x_e=[0,0]^\top$, reads}
\begin{alignat}{5}
\dot x(t)=& 
\begin{bmatrix} 0 & 1  \\ 0 & -a\end{bmatrix}
x(t) 
+
\begin{bmatrix} 0 & 0  \\ 0 & -\tilde a \end{bmatrix} 
x(t-\tau)
+&\begin{bmatrix} 0 \\ w-\sin(x_1(t)+y_e)\end{bmatrix}&
\nonumber\\
x(\theta) &=\phi(\theta), & \quad \theta \in[-\tau,0]&
\label{eq:delayedSwingEq}
\end{alignat}
\rev{with} $x(t)\in \mathbb R^2$, $a,\tilde a,\tau \in \mathbb R_{>0}$, $w\in(0,1)$,  and $\phi\in X$ in a Banach space $X$ to be selected. Unless otherwise stated, results refer to the parameter set
\begin{align}
a=0.05,\quad \tilde  a=0.125,\quad w=0.5,\quad \tau=20.
\label{eq:param_swing_eq}
\end{align}
The aim is to find an upper bound $\hat R_X> R_X$ for the radius of attraction $R_X$ in the chosen norm.
\\
\\
{\itshape (i) Selection of the State Space $X$} \\
The history of the first component $x_1$ does not contribute to the system dynamics at all. According to (\ref{eq:QuotientSpace}), this motivates the use of a quotient space $X=Q$. We define
\begin{align}
Q&:=PC([-\tau,0],\mathbb R^2) / \{\phi \in PC: \phi_1(0)=0, \phi_2\equiv 0\}, \nonumber \\
\|\phi\|_{Q}&:=\sqrt{\vert \phi_1(0) \vert ^2+\|\phi_2\|_C^2}.
\label{eq:QNorm_PC}
\end{align}
\begin{remark} [Definition on a Manifold]
There is a rotational symmetry in $x_1$, such that -
instead of $x(t)\in \mathbb R^2$ - the instantaneous values can also be defined on a cylinder, i.e.\ $[x_1(t),x_2(t)]^\top \in S^1 \times \mathbb R$ and consequently
\begin{align}
Q_{cyl}:=PC([-\tau,0],S^1\times \mathbb R) / \{\phi \in PC: \phi_1(0)=0, \phi_2\equiv 0\}. \nonumber
\end{align}
Thereby, the number of stable equilibria reduces from infinite to one in Example \ref{ex:swingEq} since all belong to the same equivalence class. 
Stability theory of RFDEs can be transferred to such systems on manifolds \cite{Hale.1993,Oliva.1969}. (The respective far reaching consequences are not in the scope of this paper.)
\end{remark}
~\\
{\itshape (ii) Selection of Primary Initial Functions}
\\
Actually, a physical meaning of $x_1$ and $x_2$ as angle and angular velocity in (\ref{eq:delayedSwingEq}) implies $\frac{\textrm d}{\textrm d t}\phi_1=\phi_2$ (Remark \ref{rem:scalarHigherOrder}). However, because of the irrelevance of the $x_1$-history in the above introduced quotient space (\ref{eq:QNorm_PC}), this relation has no consequence for the selection of initial functions.
As representative of the first component $\phi_1\colon[-\tau,0]\to\mathbb R$ we use
\begin{align}
\label{eq:phi1p1}
\phi_1(\theta)\equiv \phi_1(0)=p_1, 
\end{align}
where $p_1\in \mathbb R$ parametrizes this initial function. 
For the second component $\phi_2\colon[-\tau,0]\to\mathbb R$, we select a finite number of primary initial functions  parameterized by $p_2\in \mathbb R$ (Section \ref{sec:primary}):
\begin{subequations}
\begin{alignat}{4}
\phi_2^1(\theta;p_2)&\equiv p_2 \\
\phi_2^2(\theta;p_2)&=\left\{\begin{array}{cl} 0, & -\tau\leq\theta<0 \\ p_2, & \theta=0 \end{array}\right. \label{eq:Initial_stepFunctions} \\
\phi_2^3(\theta;p_2)&=p_2 \cos\left(\omega \theta\right),&& \omega \in \mathbb R\\ 
\phi_2^4(\theta;p_2)&=p_2 \sin\left(\omega \theta\right),&& \omega \in \mathbb R.
\end{alignat}
\end{subequations}
In order to simplify the evaluation, the primary initial functions above are chosen such that $\|\phi_2^k(\,\cdot\,;p_2)\|_C=\vert p_2 \vert$, $k\in\{1,\ldots,4\}$.
This parametrization is advantageous since $\|\cdot\|_{Q}$ norms of the initial functions $\phi\in \Phi^k=\{[\phi_1,\phi_2^k]^\top\}_k$  equal  Euclidean norms in the $(p_1,p_2)$ plane, i.e.\
\begin{align}
\|\phi(\,\cdot\,;[p_1,p_2]^\top)\|_{Q}=\|[p_1,p_2]^\top\|_2.
\label{eq:QNorm_EuclideanNorm}
\end{align}
Thus, a norm ball $\mathcal  B_Q(r){=}\{\phi\in Q: \|\phi\|_{Q}< r\}$ 
corresponds to the inner of a circle in the parameter plane. 
\begin{remark}[Case $\tau\to 0$]
If the delay $\tau \to 0$ vanishes, the domain of $\phi$  reduces to a singleton $[-\tau,0]\to \{0\}$ and the state space $Q$ is replaced by $\mathbb R^2$, which is fully covered by the two values $[p_1,p_2]^\top=[\phi_1(0),\phi_2(0)]^\top=[x_1(0),x_2(0)]^\top$ describing $\Phi$.  Then  $\|\cdot\|_{Q}$ is replaced by $\|\cdot \|_2$, such that $\hat R_2=R_2$ describes exactly the maximum inner circle $\mathcal B_2(R_2)$ within the domain of attraction of the ODE in the phase plane.
\end{remark}
~\\
{\itshape (iii) Simulative Results for Primary Initial Functions} \\
The domain of attraction for the $k$-th family of initial functions (\ref{eq:initFuncFamily}) is given by
\begin{align}
(\mathcal D_Q \cap \Phi^k)=\{\phi\in\Phi^k: \lim_{t\to \infty} x(t;\phi) = 0 \} .
\end{align}
 Figure \ref{fig:ROA_pixel} shows results in the $(p_1,p_2)$ parameter plane. To get an approximation of the relevant parts of the 
boundary $\partial (\mathcal D_Q \cap \Phi^k)$, 
parameters are increased in a star-like scheme with a finite number of equidistant angles, until convergence to the origin fails. 
For visualization purposes, Figure \ref{fig:p1_p2_plane} contains the (possibly non-convex) hull of the outer parameter values for each family of initial functions. 
\begin{figure}
\centering
\includegraphics[width=\linewidth]{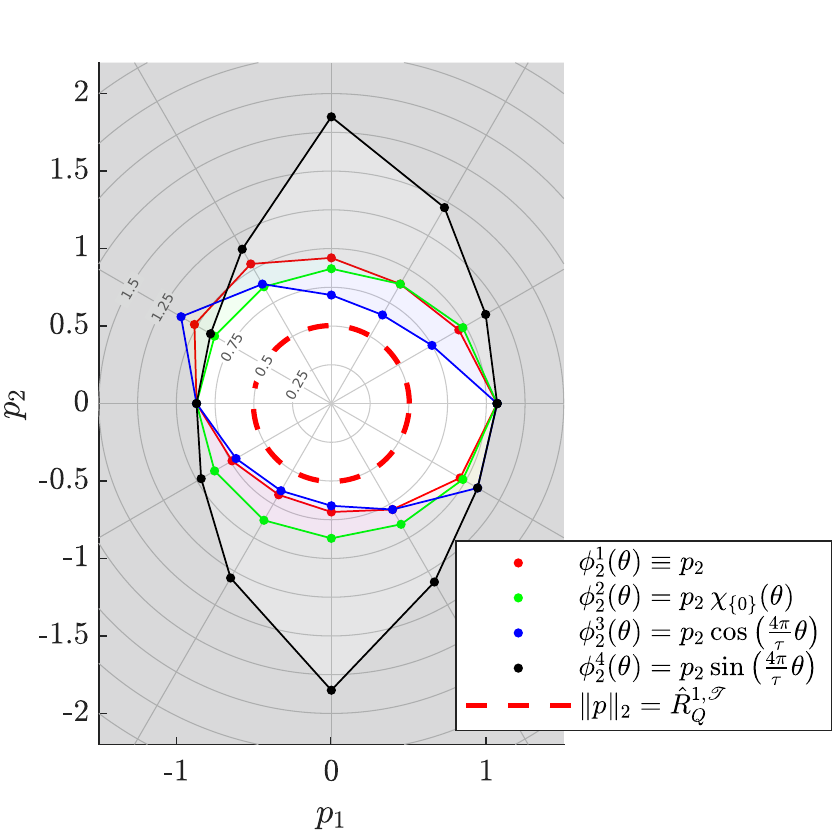}
\caption{\label{fig:p1_p2_plane}  Points mark smallest $(p_1,p_2)$ combinations along the investigated directions \rev{that} lead to non-zero-convergent solutions in (\ref{eq:delayedSwingEq}), (\ref{eq:param_swing_eq}). 
The primary initial functions $\phi(\theta; [p_1,p_2]^\top)=[p_1,\phi_2^k(\theta;p_2)]^\top$ correspond to \rev{the four} subfigures in Figure \ref{fig:ROA_pixel}.}
\end{figure}
Obviously, not the inner circle of the visualized hull but the minimum norm of the single vertices results in the upper bound
\begin{align}
\hat R^{k}_Q :=
\revi{\hat R_Q(\Phi^k)}=\min\{\|\phi\|_Q: x(t;\phi) \not \to 0, \phi \in \Phi^k\}.
\end{align}
All four families of initial functions $\Phi^k$, $k\in\{1,\ldots,4\}$ taken in  Figure \ref{fig:p1_p2_plane} lead to 
$\min_k \hat R^k_Q=\hat R^3_Q=0.65$. The latter corresponds to the most central blue point.
\\
\\
{\itshape (iv) Secondary Initial Functions} \\
By consideration of secondary initial functions, as described in Section \ref{Sec:SecondaryInitialFunctions}, an improvement of this estimation is possible without the need for further simulations.
Instead of only considering the initial states $x_0=\phi\in \Phi^k$, 
the estimation $\hat R_Q^{k,\mathscr T}:=\hat R_Q^{\mathscr T}(\Phi^k)$ uses the minimum norm of all solution states $x_t(\cdot \,; \phi)$, $t\geq 0$ \rev{that} result from non-zero-convergent primary initial functions $\phi\in \Phi^k$. 
The final estimation 
$
\hat R_Q^{\mathscr T} = \min_k \hat R_Q^{k,\mathscr T}
$
merges the results of all families of initial functions. The dashed line in Figure \ref{fig:p1_p2_plane} indicates that there are indeed diverging solutions \rev{that} come even closer to the attractor than all the considered primary initial functions.  
Secondary initial functions of all chosen families $\Phi^k$, $k\in\{1,\ldots,4\}$ yield similar minimum norm values $\hat R_Q^{k,\mathscr T}$. The best result is gained by a constant initial function, which leads to a diverging solution with minimum Q-norm $\hat R^{\mathscr T}_Q = \hat R^{1,\mathscr T}_Q =0.503$, as shown in Figure \ref{fig:Secondary}.   
\\
\\
{\itshape (v) Bifurcation Analysis}
\begin{figure}
\centering
\includegraphics[width=\linewidth]{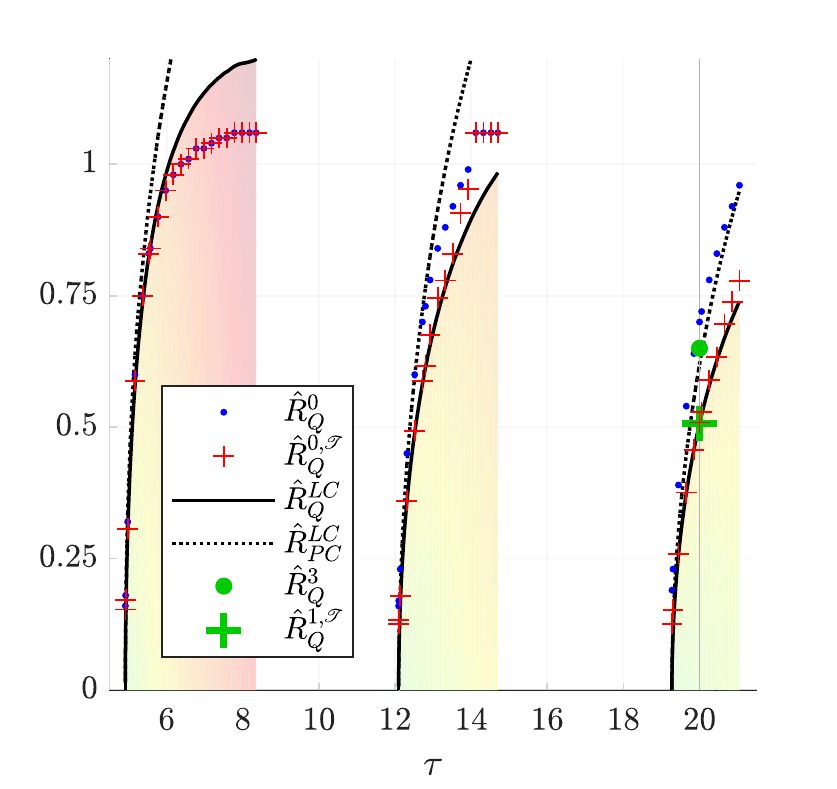}
\caption{\label{fig:limit_cycles_norms} Overestimations for the radius of attraction in  (\ref{eq:delayedSwingEq}), (\ref{eq:param_swing_eq}) with different delay values $\tau$. The background refers to a side view of Figure \ref{fig:limit_cycles}. Results are gained by limit cycle states $x_t^{LC}$, the simple  primary initial functions $\Phi^0=\{\phi(\theta)\equiv[0,-p]^\top, p\in \mathbb R_{>0}\}$, and corresponding secondary initial functions. Results from $(iii)$ and $(iv)$ for $\tau=20$ are marked in green.  
}
\end{figure}
\\ 
The trivial equilibrium is alternately stable and unstable with increasing delay. A bifurcation analysis of (\ref{eq:delayedSwingEq})  reveals that each regain of stability is accompanied by a subcritical Hopf bifurcation \rev{(cmp.\ \cite{Scholl.2019})}. Vertical slices in Figure \ref{fig:limit_cycles} show the emerging unstable limit cycle in the $(x_1,x_2)$ plane for different delay values. For the chosen delay value $\tau=20$ the corresponding periodic solution in Figure \ref{fig:limit_cycle_x0} is remarkably similar to the state \rev{that} was gained as secondary initial function in Figure \ref{fig:Secondary}. This can be explained by the saddle character of the limit cycle (Remark \ref{rem:StableManifold}). A comparison of the norm values is given in Figure \ref{fig:limit_cycles_norms}. Thereby, the role of the the quotient space norm becomes visible. The radius of attraction $R_{PC}$ in the space $PC([-\tau,0],\mathbb R^2)$ is identical to $R_{Q}$ in the quotient space. However, the uniform norm of limit cycle states would lead to unnecessarily conservative estimations $\hat R^{LC}_{PC}$ (Figure \ref{fig:limit_cycles_norms}). The minimum $Q$-norm value of the limit cycle states $x_t^{LC}$ is $\hat R^{LC}_Q:=\min_{t\in[0,T)} \|x_t^{LC}\|_Q = 0.507$. 
\\
\\
Hence, for the delay value under consideration, secondary initial functions and the bifurcation analysis based approach yield almost equal upper bounds $\hat R^{LC}_Q\approx \hat R^{\mathscr T}_Q$ on the radius of attraction $R_Q\leq \hat R_Q=:\hat R^{\mathscr T}_Q$ in Example \ref{ex:swingEq}. Obviously, this does not always hold true. For small delay values in Figure \ref{fig:limit_cycles_norms},  already the chosen simple primary initial functions lead to better estimations. This means \rev{that} primary initial functions with smaller norm values than the limit cycle lead to diverging solutions. Hence, these must cross the limit cycle in the $(x_1,x_2)$ plane. Time delay enables such crossings since forward uniqueness in RFDEs does not refer to instantaneous values $x(t)\in \mathbb R^2$ but to solution segments, i.e.\ to states $x_t\in Q$ in the infinite dimensional space $X=Q$. Additionally, Figure \ref{fig:limit_cycles_norms} illustrates that secondary initial functions can only lead to improvements. 
However, Figure \ref{fig:limit_cycles_norms} also demonstrates that the bifurcation analysis based approach is a worthwhile supplement. Compared to the search for diverging primary initial functions in time\rev{-}domain simulations, the numerical solution of boundary value problems for limit cycles is less computationally expensive. In addition, the bifurcation analysis gives further insights.
The growing limit cycles, which are revealed by a bifurcation analysis, result in an enlargement of the upper bound on the radius of attraction with increasing delay. 
\end{example}

\section{Conclusion}
Though as important as in delay-free systems, domains of attraction in time\rev{-}delay systems still find relatively little consideration in literature. The previous sections contribute the following:
\begin{itemize}
\item {\itshape Estimating Domains of Attraction: \\Approaches and Quantification} \\
Various requirements concerning the domain of attraction are important in practical applications. An overview of possible approaches for estimations of the domain of attraction in time\rev{-}delay systems is given. 
The radius of attraction in a certain Banach space represents an easy to handle number. 
In many cases a lower bound on the radius of attraction is of interest. These estimations are mostly conservative and will be subject of further considerations. Nevertheless, such results leave unavoidably open whether a small radius means a conservative estimation or indeed a small domain of attraction. 
In order to prove the latter, upper bounds on the radius of attraction become important. 
These can be obtained by simulative means. 
\item {\itshape Generalized Concepts for the Domain of Attraction: \\ No Lebesgue Measure in Infinite Dimensional Spaces} \\
It should be kept in mind that simulation usually does only cover generic cases. In the context of ordinary differential equations, genericity is addressed by concepts like quasi-stability \rev{domains} or almost global stability. The paper proposes to generalize these definitions to the infinite dimensional case, thereby the fact that there is no Lebesgue measure in infinite dimensional spaces must be taken into account.
\item {\itshape Selection of the Banach Space: \\Consequences for the Radius of Attraction} \\
The radius of attraction and its physical interpretation depend heavily on the chosen norm. 
It is pointed out that the radius of attraction in some spaces is not invariant under time scale transformations.
\item {\itshape Quotient Spaces: \\ Incorporate that States are Differently Delayed} \\
We propose to choose a quotient space if some of the state variables do not occur delayed in the system equations. Otherwise unused initial data leads to unnecessary restrictions. 
\item {\itshape Secondary Initial Functions: \\ Improvement of the Examined Set of Initial Functions } \\ An important question is how to choose the initial functions \rev{that} are \rev{numerically} tested for belonging to the domain of attraction. Besides physically or technically motivated functions, primary functions in the sense of simple function families and the construction of initial functions with more degrees of freedom by certain basis functions are convenient. The paper proposes to take into account all available solution segments from simulations as well. These secondary initial functions are able to improve estimations for the radius of attraction without notable computational effort.
\item {\itshape Bifurcation Analysis: \\ Getting A Priori Knowledge of Other Invariant Sets} \\ 
Competing equilibria, limit cylces, invariant tori, chaotic sets\rev{,} or more general invariant sets with some distance to the attractor can also contribute to upper bounds on the radius of attraction. While the delay-free system might be well understood, time delay is able to induce complex dynamics. In order to detect some further limit sets, we propose a numerical bifurcation analysis. By starting from a delay-free system, a bifurcation analysis allows first\rev{ly} to detect their emergence and then to track them numerically.  
\item {\itshape Demonstrative Example: \\ Swing Equation with Additional Delayed Damping} \\
An example demonstrates significant improvements in the estimation of the radius of attraction by secondary initial functions. 
It also confirms that the proposed bifurcation analysis based approach is a worthwhile additional element \rev{for upper bound estimations of the radius of attraction}. 
\end{itemize}

\section*{Acknowledgement}
The authors acknowledge support by the Helmholtz Association under the Joint Initiative "Energy Systems Integration" (ZT-0002).

\bibliographystyle{ieeetr}
\bibliography{Literatur}

\section{Appendix}

\subsection{Definitions of Generalized Domains and Radii of Attraction}
For some purposes it is sufficient to describe a generalized domain of attraction in which "generically" convergence to the attractor occurs. 
There might be not attracted sets $M \not\subset \mathcal D_X$ of first (Baire) category in the interior of $\overline {\mathcal D_X}$ (see Chiang \cite{Chiang.1996}, Figure 2 for an ODE example). 
A numerical testing for convergence will typically not capture such sets. At the same time they form essential restrictions to analytical estimations that do not allow any exceptions (Figure \ref{fig:Measures}). Nevertheless, meager sets are also of less relevance in practical applications. 
As a consequence, the discussion in Section \ref{sec:Introduction} must be extended by requirements like
\\[\LL]
\begin{tabularx}{\linewidth}{lX}
{\itshape (R1') }& testing whether a certain initial function is within a generalized domain of attraction,
\\[\LL]
{\itshape (R2') }&proving generic convergence to the attractor, 
\\[\LL]
{\itshape (R3') }&disproving generic convergence to the attractor,
\\[\LL]
{\itshape (R4') }&comparing systems w.r.t. generalized domains of attraction.
\end{tabularx}
\\[\LL]
\\
Such generalizations of the domain of attraction are well known from ODEs.
The above described topological point of view is only one possibility.
It is taken by Zaborszky et al. \cite{Zaborszky.1988},
who introduce the topological concept of a quasi-stability boundary, which is defined as boundary of $\textrm{int} (\overline{\mathcal D_X})$. The main application is (delay-free) power system stability \cite{Chiang.1996,Chiang.2011}.
In contrast to this topological concept, there is also the measure theoretical one, which is based on convergence for almost every (a.e.) initial value in a domain. 
Rantzer et al. \cite{Rantzer.2001} introduce the approach of Lyapunov densities for almost global stability of equilibria in ODEs and thereby ignore non-attracted sets of Lebesgue measure zero. Measure theoretic considerations can further be transferred to a probability space such that (R1') means convergence to the attractor with probability one / almost sure convergence (equivalently). 
It should be noted that the topological definition of a set of first category / a meager set on the one hand and the measure theoretical definition of a  a null set / a set of measure zero on the other hand are not equivalent. Neither includes the other \cite{Oxtoby.1980}. The authors of the present paper assume that it is an open question whether for subsets of the boundary $\partial \mathcal D_X$ equivalence can be assumed since it would require more knowledge about possible fractal structures. 
Below, we propose definitions of generalized domains of attraction in time\rev{-}delay systems. Thereby, we use the concept of prevalence by Hunt \cite{Hunt.1992} to address the lack of a Lebesgue measure in infinite dimensional spaces (Remark \ref{rem:noLebesgueMeasure}).
\begin{definition} [Generalized Concepts for Domains of Attraction] \label{def:gen_domains}
i) The {\itshape quasi-stability domain} is defined as 
\begin{align}
\mathcal D_X^{qs}\stackrel{\textrm{def}}=\textrm{int} (\overline{\mathcal D_X}), \label{eq:qs_domain}
\end{align}
where $\overline{\mathcal D_X}$ denotes the closure of the domain of attraction.  \\
ii)  An {\itshape almost-everywhere stability domain} is any set $\mathcal D_X^{ae}\supseteq \mathcal D_X$ with 
\begin{align}
\phi \in \mathcal D_X \textrm{ for a.e. } \phi \in \mathcal D^{ae} ,
\end{align}
where a.e. is for infinite dimensional spaces understood in the sense of prevalence \cite{Hunt.1992}.
\end{definition}
\noindent Obviously, these generalizations are not smaller than the actual domain of attraction, i.e.\ $\mathcal D_X\subseteq \mathcal D_X^{qs}$ as well as $\mathcal D_X\subseteq \mathcal D_X^{ae}$. 
We denote the corresponding radius of attraction by 
\begin{align}
 R^{gen}_X=\sup\{r > 0: \mathcal B_X(r)\subseteq \mathcal D^{gen}_{X}\}.
\end{align}
with $gen\in \{qs,ae\}$.

\end{document}